\newcommand{\N}{\ensuremath{\mathbb{N}}}
\newcommand{\eps}{\varepsilon}
\newcommand{\OO}{\ensuremath{\mathcal{O}}}
\newcommand{\EDF}{\ensuremath{\mathsf{EDF}}\xspace}
\newcommand{\LLF}{\ensuremath{\mathsf{LLF}}\xspace}
\newcommand{\OPT}{\ensuremath{\mathsf{OPT}}\xspace}
\newcommand{\Mediumfit}{\ensuremath{\mathsf{MediumFit}}\xspace}
\newcommand{\A}{\ensuremath{\mathsf{A}}\xspace}
\newcommand{\double}{{\sf Double}\xspace}
\spnewtheorem{observation}{Observation}{\bfseries}{\itshape}
\title{An $\OO(m^2\log m)$-Competitive Algorithm for Online Machine
  Minimization\thanks{This research was supported by the German
    Science Foundation (DFG) under contract  ME 3825/1. The third
    author was supported by DFG within the research training group `Methods
  for Discrete Structures' (GRK 1408).} }
\author{Lin Chen\inst{1} 
\and Nicole Megow\inst{2} 
\and Kevin Schewior\inst{1} 
}
\institute{Technische Universit\"at Berlin, Institut f\"ur Mathematik,
  Berlin, Germany.\\\email{\{lchen, schewior\}@math.tu-berlin.de}.
\and
Technische Universit\"at M\"unchen, Zentrum f\"ur Mathematik, Garching
bei M\"unchen, Germany. \email{nmegow@ma.tum.de}
}
\begin{document}

\maketitle

\begin{abstract}
We consider the online machine minimization problem in which jobs with
hard deadlines arrive online over time at their release dates. The
task is to determine a feasible schedule on a minimum number of
machines. 
Our main result is a general $\OO(m^2\log m)$-competitive algorithm
for the preemptive online problem, where $m$ is the optimal number of
machines used in an offline solution. This is the first improvement on
an $\OO(\log (p_{\max}/p_{\min}))$-competitive algorithm by Phillips
et al. (STOC~1997), which was to date the best known algorithm even
when~$m=2$. Our algorithm is
$\OO(1)$-competitive for any $m$ that is bounded by a constant.
To develop the algorithm, we investigate two complementary special cases
of the problem, namely, laminar instances and agreeable instances,
for which we provide an $\OO(\log m)$-competitive and an
$\OO(1)$-competitive algorithm, respectively. Our $\OO(1)$-competitive
algorithm for agreeable instances actually produces a non-preemptive
schedule, which is of its own interest as there exists a strong lower
bound of $n$, the number of jobs, for the general non-preemptive online machine minimization problem by Saha (FSTTCS 2013),
which even holds for laminar instances.
\end{abstract}

\section{Introduction}
%
%
We consider the fundamental problem of minimizing the number of
machines that is necessary for feasibly scheduling jobs with
release dates and hard deadlines. We consider the online variant of this problem in which every job
becomes known to the online algorithm only at its release date. We
denote this problem as the {\em online machine minimization problem}. 
We will show that we may restrict to the {\em semi-online} problem variant in which the
online algorithm is given 
the optimal number of machines, $m$, in advance.

Our main result is a  $\OO(m^2\log m)$-competitive algorithm for the preemptive
online machine minimization problem. This is the
first improvement upon a $\OO(\log
\frac{p_{\max}}{p_{\min}})$-competitive algorithm 
\cite{phillipsSTW02}. Our competitive ratio 
depends only on the optimum value, $m$, instead of other input
parameters. In particular, it is constant if the optimum is bounded by a fixed
constant. Such a result was not known even for $m=2$.

\medskip
\noindent\textbf{Previous results.} The preemptive semi-online machine minimization problem, in which the
optimal number of machines is known in advance, has been
investigated extensively by Phillips et al.~\cite{phillipsSTW02}, and there have
hardly been any improvements since then. Phillips et al.\ show a
general lower bound of $\frac{5}{4}$ and leave a huge gap to the
upper bound $\OO(\log \frac{p_{\max}}{p_{\min}})$ on the competitive
ratio for the so-called {\em Least Laxity First}
(\textsf{LLF}) Algorithm. Not so surprisingly, they also rule out that the {\em Earliest Deadline
  First} (\textsf{EDF}) Algorithm may improve on the performance of \textsf{LLF};
indeed they show a lower bound of
$\Omega(\frac{p_{\max}}{p_{\min}})$. It is a wide open question if
preemptive semi-online machine minimization admits a constant
competitive 
ratio
or even a competitive ratio independent of 
the number of jobs or processing times, e.g., an $f(m)$-competitive
algorithm for some function $f$. It is not even known whether a constant competitive algorithm exists for
$m=2$.

The non-preemptive problem is considerably harder than the preemptive
problem. If the set of jobs arrives online over time, then no algorithm
can achieve a competitive ratio sublinear in the number of jobs~\cite{Saha13}. 
However, relevant special cases admit
online algorithms with small constant worst-case guarantees. The
problem with unit processing times was studied in a series of
papers~\cite{KleywegtNST99,ShiY08,KaoCRW12,Saha13,DevanurMPY14} and implicitly in
the context of energy minimization in~\cite{BansalKP07}. It has been shown that an optimal online
algorithm has the exact competitive ratio~$e\approx
2.72$~\cite{BansalKP07,DevanurMPY14}. For non-preemptive scheduling of
jobs with equal deadlines, an upper bound of~$16$ is given
in~\cite{DevanurMPY14}. We are not aware of any previous work on
online machine minimization restricted to agreeable instances. However, in other contexts, e.g., online buffer management~\cite{JezLSS12} and scheduling
with power management~\cite{AlbersMS14,AngelBC14}, it has been studied
as an important and relevant class of instances.

In a closely related problem variant, an online algorithm is given
extra speed to the given number of machines instead of additional unit-speed
machines. The goal is to find an algorithm that requires the minimum extra
speed.  This problem seems much better understood and speedup factors
below~$2$ are known~(see~\cite{phillipsSTW02,lamT99,anandGM11}). However, the power
of speed is much stronger than that of additional machines since it can be viewed to allow
parallel processing of jobs to some extent. None of the algorithms that are known to perform
well for the speed-problem, e.g., \EDF and \LLF, 
are $f(m)$-competitive for any function~$f$ for the
machine minimization problem.

We also mention that the offline problem, in
which all jobs are known in advance, can be solved
optimally in polynomial time if job preemption is allowed~\cite{horn74}. 
Again, the problem complexity increases drastically if
preemption is not allowed. In fact, the problem of  deciding whether
one machine suffices to schedule all the jobs non-preemptively is strongly
NP-complete~\cite{GareyJ77}. It is even open if a constant-factor
approximation exists; a lower bound of~$2-\eps$ was given
in~\cite{CieliebakEHWW04}. The currently best known non-preemptive
algorithm has an approximation factor of $\OO(\sqrt{(\log n)/(\log\log
    n)})$~\cite{ChuzhoyGKN04}. Small constant factors were obtained for special
cases~\cite{CieliebakEHWW04,YuZ09}. However, when slightly increasing the speed
of the machines, then also the general problem can be approximated within a factor~$2$~\cite{ImLMT15}.

\medskip
\noindent\textbf{Our Contribution.} Our main contribution is a new preemptive
online algorithm with a competitive ratio $\OO(m^2\log m)$, where $m$ is
the optimum number of machines. It is the
first improvement upon the $\OO(\log
\frac{p_{\max}}{p_{\min}})$-competitive algorithm by Phillips et
al.~\cite{phillipsSTW02}. Specifically, if the optimum value $m$ is bounded by a constant, our
algorithm is $\OO(1)$-competitive. 

We achieve this algorithm by firstly studying two complementary special cases
of the problem, namely, laminar instances and agreeable instances. In
a laminar instance, if the feasible time intervals for processing (between release date
and deadline) of any two jobs overlap, then one interval is completely
contained in the other. In an agreeable instance, however, the relative order of release dates coincides with that of the corresponding deadline. We provide an 
$\OO(\log m)$-competitive algorithm for laminar instances and an
$\OO(1)$-competitive algorithm for agreeable instances. Then we 
combine both techniques to derive an $\OO(m^2\log m)$-competitive
algorithm for the general problem. 

It is not difficult to see that we may assume that the optimum
  number of machines $m$ is known, i.e, the semi-online model, and that jobs with a small
  processing time relative to the entire time window (``loose'' jobs) are
  easy to schedule. For agreeable instances we show that when scheduling ``tight''
  jobs simply in the middle of their time windows, then there are at most
  $\OO(m)$ jobs running at the same time. 
Our $\OO(1)$-competitive
algorithm 
actually produces a non-preemptive schedule.
This result is of its own interest as there exists a strong lower bound of $n$ for the general non-preemptive (semi)-online machine minimization problem~\cite{Saha13},
which even holds for the special case of laminar instances.

The most difficult special case seems to be the scheduling of ``tight''
  laminar jobs. Here we separate the job-to-machine assignment from
  the scheduling procedure. Upon arrival, we assign a job
  irrevocably to a single machine and simply run \EDF on each machine individually. Thus we restrict ourselves to
  non-migratory schedules, which allows us to keep a better control over
  the remaining processing capacity in a time interval. To assign a job to a machine we consider
  on each machine the previously assigned smallest job that is
  relevant at this point in time, and we find one
  whose laxity is large enough to fully cover the new job's entire
  time window. We then  assign the new job to the same machine. 
  Fitting a job's {\em full} time window (instead of only the processing
  volume) into the laxity of a previously
    assigned job may seem rather restrictive. But jobs are
    ``tight'', i.e., they contribute a significant processing volume
    to the interval, and we gain much more structure for the analysis.
    A greedy assignment turns out to fill up laxities too aggressively. To
    slow down this process we employ a more sophisticated balancing
    scheme in which the laxity of a job is divided into evenly-sized bins and
    jobs are distributed carefully over these bins. Our analysis shows
    that the algorithm is $\OO(\log m)$-competitive.
  
For the general setting it is tempting to use the previous
  algorithms as a black box. However, it is unclear if an online
  partitioning into a small number of agreeable and laminar
  sub-instances is possible. Instead, we propose a more sophisticated variant of
  the assignment procedure for the laminar case, losing an additional factor of
  $m$. Instead of assigning a new job to a single machine, we form
  $\OO(m^2 \log m)$ groups of jobs. Each of these groups is then
  scheduled on $\OO(m)$ machines by generalizing the idea from the
  agreeable setting. 

\smallskip
\noindent\textbf{Outline.} In Section~\ref{sec: pre}, we define the
problem and give first structural insights to the problem. We give an $\OO(1)$-competitive algorithm
for agreeable instances in Section~\ref{sec: agreeable} and an
$\OO(\log m)$-competitive algorithm for laminar instances in
Section~\ref{sec: laminar}. In Section~\ref{sec: general}, we show how
to extend the techniques for the special cases to an $\OO(m^2\log m)$-competitive algorithm for the general problem.

\section{Preliminaries}\label{sec: pre}

\noindent\textbf{Problem Definition.} Given is a set of jobs~$J=\{1,2,\ldots,n\}$ where each job~$j \in J$ has a processing time~$p_j\in \N$, a release date~$r_j\in \N$ which is the earliest possible time at which the job can be processed, and a deadline~$d_j\in \N$ by which it must be completed. The task is to open a minimum number of machines such that there is a feasible schedule in which no job misses its deadline. In a feasible schedule each job~$j\in J$ is scheduled for~$p_j$ units of time within the time window~$[r_j,d_j)$. Each opened machine can process at most one job at any time, and no job is running on multiple machines at the same time. Unless stated differently, we allow job preemption, that is, a job can be preempted at any moment in time and may resume processing later on the same or any other machine. When preemption is not allowed, then a job must run until completion once it has started.

To evaluate the performance of our online algorithms, we perform a {\em competitive analysis} (see e.g.~\cite{borodinEY98}). We call an online algorithm~\A  $c$-{\em competitive} if~$m_{\A}$ machines with~$m_{\A} \leq c\cdot m$ suffice to guarantee a feasible solution for any instance that admits a feasible schedule on~$m$ machines.

\smallskip
\noindent\textbf{Notation.} For any job $j\in J$, the {\em laxity} is
defined as $\ell_j=d_j-r_j-p_j$. We call a job $\alpha$-{\em loose},
for some $\alpha<1$, if $p_j\le \alpha (d_j-r_j)$ and $\alpha$-{\em tight} otherwise. The {\em (processing) interval} of $j$ is $I(j)=[r_j,d_j)$. For a set of jobs $S$, we define $I(S)=\cup_{j\in S}I(j)$. For $I=\cup_{i=1}^k [a_i,b_i)$ where $[a_1,b_1),\dots,[a_k,b_k)$ are pairwise disjoint, we define the \textit{length} of $I$ to be $|I|=\sum_{i=1}^k (b_i-a_i)$.

\smallskip
\noindent\textbf{Characterization of the Optimum.} For $I$ as above,
the \textit{contribution} of a job $j$ to $I$ is
$C(j,I)\coloneqq\max\{0,|I\cap I(j)|-\ell_j\}$, that is, the minimum
processing time that $j$ must receive in $I$ 
in any feasible schedule. The contribution of a job set $S$ to $I$ is
the sum of the individual contributions of jobs in $S$, and we denote
it by $C(S,I)$. Clearly, if $S$ admits a feasible schedule on $m$
machines, $C(S,I)/|I|$ must not exceed $m$. Interestingly, this bound
is tight, 
which we prove in the appendix. 

\begin{theorem}\label{thm: strong-density}
Let $m$ be the minimum number of machines needed to schedule instance $J$ feasibly. Then there exists a union of intervals $I$ with $\lceil C(J,I)/|I|\rceil=m$ but none with $\lceil C(J,I)/|I|\rceil>m$.
\end{theorem}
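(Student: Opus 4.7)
The claim has two parts that I would address separately.

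The upper bound ``no union of intervals $I$ satisfies $\lceil C(J,I)/|I|\rceil>m$'' follows from a volume argument: in any feasible schedule on $m$ machines, the total processing capacity across $I$ is $m|I|$, whereas by the definition of $C$ the jobs must collectively receive at least $C(J,I)$ units of processing inside $I$. Hence $C(J,I)\le m|I|$, which gives $\lceil C(J,I)/|I|\rceil\le m$.

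For the existence of an $I$ with $\lceil C(J,I)/|I|\rceil=m$, I would proceed by contradiction. Combined with the first part, it suffices to assume that $C(J,I)\le(m-1)|I|$ holds for every union of intervals $I$ and derive a feasible preemptive schedule on only $m-1$ machines, contradicting the minimality of $m$. Without loss of generality $I$ has its endpoints in $\{r_j,d_j:j\in J\}$, so all relevant quantities are integers.

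To verify sufficiency of $m-1$ machines I would use a max-flow characterization of preemptive feasibility. Build a network with source $s$, sink $t$, one node $v_j$ per job, and one node $u_\tau$ per elementary time slot $\tau$ (obtained by cutting the time axis at all distinct release dates and deadlines). Assign capacity $p_j$ to $s\to v_j$; capacity $|\tau|$ to $v_j\to u_\tau$ whenever $\tau\subseteq[r_j,d_j)$; and capacity $(m-1)|\tau|$ to $u_\tau\to t$. A max $s$-$t$-flow of value $\sum_j p_j$ in this network translates into a feasible preemptive (migratory) schedule on $m-1$ machines via a standard McNaughton-style wrap-around inside each elementary slot. By max-flow min-cut it suffices to show that every cut has value at least $\sum_j p_j$. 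Parameterizing a cut by the set $T$ of time slots whose edge to $t$ is cut and letting $I=\bigcup T$, the optimum contribution of each job node is $\min\{p_j,\,|[r_j,d_j)\setminus I|\}$, and a short algebraic manipulation rewrites the desired cut inequality as exactly $C(J,I)\le(m-1)|I|$, which holds by assumption.

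The main obstacle is calibrating the flow-network capacities so that the min-cut expression matches $C(J,I)$ precisely; once this is done, the remaining pieces, including the conversion from a fractional $s$-$t$-flow to an actual migratory preemptive schedule, are classical.
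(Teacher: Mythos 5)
Your proposal is correct, and the upper-bound half coincides with the paper's. For the existence half, however, you take a genuinely different route: you invoke the max-flow characterization of preemptive feasibility (Horn's network with job nodes, elementary time-slot nodes, and per-slot capacity $(m-1)|\tau|$), reduce the claim to the min-cut inequality, and observe that after parameterizing a cut by the set $T$ of slot nodes kept on the source side and writing $I=\bigcup T$, each job node contributes $\min\{p_j,\,|I(j)\setminus I|\}$, so the cut inequality $(m-1)|I|+\sum_j\min\{p_j,\,|I(j)\setminus I|\}\geq\sum_j p_j$ rearranges exactly to $C(J,I)\leq(m-1)|I|$. This is precisely the ``LP duality'' route that the paper only mentions in a closing remark without carrying out. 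The paper's own argument is instead a direct combinatorial exchange argument: it fixes an optimal integral $m$-machine schedule whose occupation profile $\chi=(\chi_m,\dots,\chi_0)$ is lexicographically minimal, builds a digraph on time slots with an arc $v_i\to v_k$ whenever one unit of work can be shifted from slot $i$ to slot $k$, shows that no directed path can lead from a fully loaded slot to a slot with load at most $m-2$ (else one could lex-improve the schedule), and then takes $I^\star$ to be the set of slots reachable from the fully loaded ones; every such slot has load $\geq m-1$, at least one has load $m$, and no work can leave $I^\star$, giving $C(J,I^\star)/|I^\star|>m-1$. Your route is shorter and leans on classical machinery (flow feasibility plus McNaughton wrap-around, which the paper already cites via Horn), at the cost of invoking max-flow/min-cut duality; the paper's route is self-contained and elementary, and the reachable-set construction yields the witness $I^\star$ explicitly rather than only by contradiction. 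Both are valid; yours is a clean alternative that the authors themselves acknowledge.
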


\smallskip
\noindent\textbf{Reduction to the Semi-Online Problem.} We show that
we may assume that the optimum number of machines~$m$ is known in
advance by
losing at most a factor~$4$ in the competitive ratio. To do so, we
employ the general idea of {\em doubling} an unknown
parameter~\cite{chrobakK06}. More specifically, we open additional
machines whenever the optimum solution has doubled. 

Let $\mathsf{A}_{\rho}(m)$ denote a~$\rho$-competitive algorithm for the
semi-online machine minimization problem given the optimum number of
machines~$m$. Further, denote by $m(t)$ the minimum number of machines needed to feasibly schedule all
jobs released up to time~$t$. Then our algorithm for the online problem is as follows.

\medskip
{\bf Algorithm \double}:
\begin{compactitem}
\item Let~$t_0=\min_{j\in   J}r_j$.
  For $i=1,2,\ldots$ let $t_i=\min\{t\,|\, m(t) > 2m(t_{i-1})\}$.
\item At any time~$t_i$, $i=0,1,\ldots$, open $2\rho
  m(t_i)$ additional machines. All jobs with $r_j\in [t_{i-1},t_i)$
  are scheduled by Algorithm $\mathsf{A}_{\rho}(2m(t_{i-1}))$  on the
  machines opened at time~$t_{i-1}$.
\end{compactitem}
\medskip

Since the time points~$t_0,t_1,\dots$ as well as $m(t_0),m(t_1),\dots$ can be computed online and $\mathsf{A}_{\rho}$ is assumed to be an algorithm for the semi-online problem, this procedure can be executed online. 
Notice 
that \double does not preempt jobs which would not have been preempted by Algorithm~$\mathsf{A}_{\rho}$.

\begin{theorem}\label{thm: black box}
  Given a $\rho$-competitive algorithm for (non-)preemptive
  semi-online machine minimization, \double is $4\rho$-competitive for
  (non-)preemptive online machine minimization. 
\end{theorem}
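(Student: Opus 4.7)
The plan is to argue two things separately: that \double produces a feasible schedule, and that the total number of machines it opens is at most $4\rho m$.

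For feasibility, I would first observe that the batches of jobs released in the disjoint intervals $[t_{i-1},t_i)$ are scheduled on disjoint sets of machines (those opened at time $t_{i-1}$), so the batches never interfere. It remains to check that the $2\rho m(t_{i-1})$ machines opened at $t_{i-1}$ indeed suffice for $\mathsf{A}_\rho(2m(t_{i-1}))$ to feasibly schedule the batch $J_i$ of jobs released in $[t_{i-1},t_i)$. By the definition of $t_i$ we have $m(t) \le 2m(t_{i-1})$ for every $t < t_i$; in particular the jobs in $J_i$, which form a subset of all jobs released strictly before $t_i$, admit a feasible schedule on $2m(t_{i-1})$ machines. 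Since $\mathsf{A}_\rho$ is $\rho$-competitive in the semi-online model when told that the optimum is at most $2m(t_{i-1})$, it schedules $J_i$ feasibly using at most $\rho \cdot 2m(t_{i-1}) = 2\rho m(t_{i-1})$ machines, matching the number we opened.

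For the machine count, let $K$ be the largest index for which $t_K$ is defined; such a $K$ exists because there are finitely many jobs and $m(t) \le m$ always. The defining inequality $m(t_i) > 2\, m(t_{i-1})$ gives, by backward induction from $K$, the estimate $m(t_{K-j}) \le m(t_K)/2^{j} \le m/2^{j}$ for every $j = 0,1,\dots,K$. Summing the openings, I obtain
\begin{equation*}
\sum_{i=0}^{K} 2\rho\, m(t_i) \;=\; 2\rho \sum_{j=0}^{K} m(t_{K-j}) \;\le\; 2\rho\, m \sum_{j=0}^{K} 2^{-j} \;<\; 4\rho\, m ,
\end{equation*}
so the total number of machines opened by \double is at most $4\rho m$, which is the desired competitive ratio. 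The statement for the non-preemptive case is inherited automatically, since the remark preceding the theorem already notes that \double only preempts a job whenever $\mathsf{A}_\rho$ does.

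The argument is essentially a textbook doubling estimate and I do not anticipate a real obstacle; the only point demanding care is the choice to pass the generous value $2m(t_{i-1})$ to $\mathsf{A}_\rho$ rather than $m(t_{i-1})$ itself, which is precisely what ensures each batch $J_i$ is within the promised capacity before the next doubling is triggered.
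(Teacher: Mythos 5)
Your proof is correct and follows essentially the same route as the paper's: partition the jobs by the intervals $[t_{i-1},t_i)$, use the definition of $t_i$ to bound the optimum on each batch by $2m(t_{i-1})$ so the subroutine's machine budget suffices, and then sum the geometric series $\sum_i 2\rho\,m(t_i)\le 4\rho m$ driven by the doubling invariant $m(t_i)>2m(t_{i-1})$. The paper's proof is the same modulo a shift in the indexing of the batches.
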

\vspace*{-1.5ex}
In the rest of the paper we will thus be concerned with the semi-online problem. 

\medskip 
\noindent\textbf{Scheduling Loose Jobs.} \label{subsec: oa-edf-good}
We show that, for any fixed $\alpha<1$, $\alpha$-loose jobs are easy
to handle via a simple greedy algorithm called  {\em Earliest Deadline
  First} ($\EDF$) that schedules at any time~$m'=\rho m$ unfinished
jobs with the smallest deadline. If every job is feasibly scheduled,
it is $\rho$-competitive. 

\begin{theorem}\label{thm: EDF-small}
Let $\alpha\in(0,1)$. \EDF is $1/(1-\alpha)^2$-competitive 
for any instance that consists only of $\alpha$-loose jobs.
\end{theorem}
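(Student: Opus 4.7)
The plan is a proof by contradiction via the strong density characterization in Theorem~\ref{thm: strong-density}. Suppose that \EDF on $m' := \lceil m/(1-\alpha)^2 \rceil$ machines fails on some $\alpha$-loose instance, and let $j^*$ with deadline $d^* := d_{j^*}$, release date $r^* := r_{j^*}$, and laxity $\ell^* := d^* - r^* - p_{j^*}$ be a job that misses its deadline. Since $j^*$ is $\alpha$-loose, $\ell^* \ge (1-\alpha)(d^* - r^*)$.

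I would first isolate the critical time interval of high \EDF utilization. Let $t^*$ be the latest time strictly before $d^*$ at which at least one of the $m'$ machines is either idle or running a job with deadline exceeding $d^*$ (taking $t^* = -\infty$ if no such time exists). Throughout $(t^*, d^*)$ all $m'$ machines continuously process jobs of deadline at most $d^*$, yielding total work $m'(d^* - t^*)$. Since $j^*$ misses its deadline, the total time in $[r^*, d^*)$ when $j^*$ is not running must exceed $\ell^* \ge (1-\alpha)(d^* - r^*)$, and at every such moment all $m'$ machines process jobs of deadline strictly less than $d^*$. Moreover, an \EDF-rule argument forces $t^* < r^*$: otherwise $j^*$ would be pending at $t^*$, and \EDF would place it on the slack machine, contradicting the definition of $t^*$.

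Next I would contradict Theorem~\ref{thm: strong-density} by exhibiting a union of intervals $I^\dagger$ with $C(J, I^\dagger)/|I^\dagger| > m$. A first candidate is $I^\dagger_0 := [r^*, d^*)$: here $j^*$ itself contributes $C(j^*, I^\dagger_0) = p_{j^*}$, and by the previous step \EDF performs at least $m'(1-\alpha)(d^* - r^*) = m'(1-\alpha)|I^\dagger_0|$ units of work on other jobs of deadline at most $d^*$ within $I^\dagger_0$. Provided every such job $j$ satisfies $r_j \ge r^*$, its full window lies in $I^\dagger_0$ and hence $C(j, I^\dagger_0) = p_j$; summing yields $C(J, I^\dagger_0) \ge m'(1-\alpha)|I^\dagger_0|$, which already exceeds $m|I^\dagger_0|$ because $m'(1-\alpha) = m/(1-\alpha) > m$.

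The main obstacle is the possible presence of carry-over jobs with $r_j < r^*$ that \EDF processes inside $I^\dagger_0$: such a job can absorb \EDF work without contributing its full $p_j$ to $C$, since $C(j, I^\dagger_0) = \max\{0, r_j + p_j - r^*\}$ may be much smaller than $p_j$. To handle them, I would enlarge the interval to $I^\dagger := [r_{\min}, d^*)$, where $r_{\min}$ is the earliest release date among such carry-over jobs, so that each of their windows lies entirely in $I^\dagger$ and each contributes its full $p_j$. The second factor of $(1-\alpha)$ would enter through $\alpha$-looseness of the extremal carry-over job, which constrains how far $r_{\min}$ can lie before $r^*$; the target estimate is roughly $(d^* - r_{\min})/(d^* - r^*) \le 1/(1-\alpha)$, so that the \EDF work lower bound $m'(1-\alpha)(d^* - r^*)$ still yields $C(J, I^\dagger)/|I^\dagger| \ge m'(1-\alpha)^2 = m$, strict enough (using the strict excess from $j^*$'s unfinished work) to contradict Theorem~\ref{thm: strong-density}.
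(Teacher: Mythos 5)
Your plan takes a genuinely different route from the paper: the paper does not invoke Theorem~\ref{thm: strong-density} for this result at all. Instead it proves an inductive workload lemma (for any busy algorithm $\mathsf{A}$, $W_\mathsf{A}(t)\le W_{\OPT}(t)+\tfrac{\alpha}{1-\alpha}\,m\,(d_{\max}-t)$ as long as no deadline has been missed up to $t$) and then derives the contradiction by comparing this upper bound on $W_{\EDF}(r_{j^\star})$ with the lower bound $(1-\alpha)\rho m (d_{\max}-r_{j^\star})$ coming from $j^\star$ being preempted for at least $(1-\alpha)(d_{j^\star}-r_{j^\star})$ time. Your idea of exhibiting an overfull interval directly is attractive, and the first part (restricting attention to deadline $\le d^\star$, observing that whenever $j^\star$ is idle all $m'$ machines run competing work, and choosing $t^\star$) is sound.

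The gap is the ``target estimate'' $(d^\star-r_{\min})/(d^\star-r^\star)\le 1/(1-\alpha)$, which you only state as a goal. It does not follow from $\alpha$-looseness of the extremal carry-over job. Looseness of $j_0$ with $r_{j_0}=r_{\min}$ gives $p_{j_0}\le \alpha(d_{j_0}-r_{\min})$, which bounds $d^\star-r_{\min}$ from \emph{below} in terms of $p_{j_0}$ but places no upper bound on $r^\star-r_{\min}$: a carry-over job can have tiny processing time and a time window reaching far into the past (e.g.\ $r_{j_0}=0$, $d_{j_0}$ slightly below $d^\star$, $p_{j_0}$ very small), so $d^\star - r_{\min}$ can be arbitrarily large relative to $d^\star - r^\star$. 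In that situation $|I^\dagger|$ blows up and the density of $[r_{\min},d^\star)$ need not exceed $m$. What rescues the argument in such cases is that for $j_0$ to still be alive at $r^\star$ it must have been displaced on all $m'$ machines for a long stretch inside $[r_{\min},r^\star)$, which forces large load at \emph{earlier} times --- i.e.\ the overfull interval you need may not be $[r_{\min},d^\star)$ but something found by pushing further back through yet earlier carry-over jobs. Chasing that recursion is exactly what the paper's induction (its Case~3b, which applies the hypothesis at time $r_j$ for the offending job $j$) accomplishes, and your proposal would need an analogous inductive/recursive step rather than a one-shot enlargement of the window. As written, the proof is incomplete at this point.
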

As we aim for asymptotical competitive ratios in this paper, we
can from now on assume that all jobs are $\alpha$-tight for a fixed
$\alpha\in (0,1)$.

\section{A Constant-Competitive Ratio for Agreeable Instances}\label{sec: agreeable}

Consider an instance~$J$ in which any two jobs $j,j'\in J$ are
\textit{agreeable}, that is, $r_j<r_{j'}$ implies~$d_j\leq d_{j'}$. We
also call $J$ agreeable. For such instances, we derive an
$\OO(1)$-competitive online algorithm. By Theorem~\ref{thm: black box},
we may assume that the optimum value $m$ is known in advance. Using Theorem~\ref{thm: EDF-small}, we can choose some constant
$\alpha\in(0,1)$ and schedule $\alpha$-loose jobs by \EDF on a
separate set of machines. It remains to show an $\OO(1)$-competitive
algorithm for $\alpha$-tight jobs.

We define the \textit{$\beta$-interval} of a job $j$, for some $\beta
\leq 1/2$, as
$I_\beta(j)\coloneqq[r_j+\beta\ell_j,d_j-\beta\ell_j)$. We use the
following simple non-preemptive algorithm.

\medskip

\noindent {\bf Algorithm \Mediumfit}: Each job $j$ is scheduled in the middle of its
feasible time window, i.e., during $I_{1/2}(j)$. If for some job~$j$ there
is no vacant machine during $I_{1/2}(j)$, then we open a new
machine. 

\medskip

For the analysis of \Mediumfit, let two jobs $j,j'$ be
$\beta$-{\em agreeable} if they are agreeable and their $\beta$-intervals
have a non-empty intersection. We give an upper bound on the number of jobs that can be
$\beta$-agreeable with a single job. In fact, we will prove a
  stronger result than actually needed for this section, but it will
  be useful for analyzing the general algorithm in Section~\ref{sec: general}.

\begin{lemma}\label{lem: strong-agreeable}
Consider a job $j$ and some fixed $\alpha\in (0,1)$ and $\beta\in(0,1/2]$. Then there exist at most $\mathcal{O}(m)$ $\alpha$-tight jobs $j'$ with $\ell_{j'}\leq\ell_j$ such that $j$ and $j'$ are $\beta$-agreeable.
\end{lemma}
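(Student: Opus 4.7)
The plan is to apply the density characterization (Theorem~\ref{thm: strong-density}) to several short intervals tailored to subclasses of
$S := \{j' : j' \text{ is } \alpha\text{-tight}, \ell_{j'} \leq \ell_j, \text{ and } j' \text{ is } \beta\text{-agreeable with } j\}$.
By agreeability, every $j' \in S$ satisfies either $r_{j'} \leq r_j \wedge d_{j'} \leq d_j$, or $r_{j'} \geq r_j \wedge d_{j'} \geq d_j$; call these sets $S_L$ and $S_R$. By the symmetric roles of the two endpoints of $I(j)$, it suffices to bound $|S_L| \leq \OO(m)$.

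First note that for every $j' \in S_L$ the $\beta$-agreeable condition gives $d_{j'} > r_j + \beta \ell_j + \beta \ell_{j'} \geq r_j + \beta \ell_j$, and since $r_{j'} \leq r_j$ this forces $I_1 := [r_j, r_j + \beta \ell_j) \subseteq I(j')$. I would handle the low-laxity jobs via $I_1$: for $j' \in S_L$ with $\ell_{j'} \leq \beta \ell_j / 2$ one has $C(j', I_1) \geq \beta \ell_j - \ell_{j'} \geq \beta \ell_j / 2$, so Theorem~\ref{thm: strong-density} applied to $I_1$ (of length $\beta \ell_j$) bounds the number of such jobs by $\OO(m)$.

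The remaining jobs $j' \in S_L$ have $\ell_{j'} > \beta \ell_j / 2$; being $\alpha$-tight, they satisfy $p_{j'} > \frac{\alpha}{1-\alpha} \ell_{j'} > c_0 \ell_j$ with $c_0 := \alpha \beta /(2(1-\alpha))$. I would further split by the position of $d_{j'}$. If $d_{j'} \geq r_j + 2 \ell_j$, then $I_2 := [r_j, r_j + 2 \ell_j) \subseteq I(j')$ and $C(j', I_2) \geq 2 \ell_j - \ell_{j'} \geq \ell_j$, so density on $I_2$ (length $2 \ell_j$) bounds this case by $\OO(m)$. Otherwise $d_{j'} < r_j + 2 \ell_j$, and I use $I_3 := [r_j - 2 \ell_j, r_j + 2 \ell_j)$ of length $4 \ell_j$: if $I(j') \subseteq I_3$ then $C(j', I_3) = p_{j'} > c_0 \ell_j$; else $r_{j'} < r_j - 2 \ell_j$, and together with $d_{j'} > r_j + \beta \ell_j$ this gives $|I_3 \cap I(j')| > (2 + \beta) \ell_j$ and hence $C(j', I_3) > (1 + \beta) \ell_j$. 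Either way, density on $I_3$ bounds this subclass by $\OO(m)$ (with constant depending on $\alpha, \beta$).

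Summing these $\OO(m)$ bounds gives $|S_L| = \OO(m)$; the symmetric argument using intervals near $d_j$ bounds $|S_R| = \OO(m)$, proving $|S| = \OO(m)$. The main obstacle, and the reason for the case analysis, is to select for each subclass an interval short enough to yield a tight density bound yet such that every job in the subclass contributes a significant fraction of its length; the split by laxity (small versus $\Omega(\ell_j)$) and by the position of $d_{j'}$ (inside versus outside a length-$\OO(\ell_j)$ window around $r_j$) is what makes the argument decompose into a constant number of tight cases.
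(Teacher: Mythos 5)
Your proof is correct, and every step checks out (the $\beta$-agreeability indeed forces $d_{j'}>r_j+\beta\ell_j+\beta\ell_{j'}$ for $j'\in S_L$, the density bound of Theorem~\ref{thm: strong-density} applies to each of $I_1,I_2,I_3$, and the time-reversal symmetry handles $S_R$), but it takes a noticeably different and more elaborate route than the paper's. The paper applies Theorem~\ref{thm: strong-density} \emph{once} to the union $I=[r_j-2\ell_j,\,r_j+2\ell_j)\cup[d_j-2\ell_j,\,d_j+2\ell_j)$ of two short intervals (total length at most $8\ell_j$), and then splits into only two cases by the length of $I(j')$: if $|I(j')|\ge 2\ell_j$ then $I(j')$ covers a length-$2\ell_j$ window around $r_j$ or $d_j$ and contributes at least $2\ell_j-\ell_{j'}\ge\ell_j$; if $|I(j')|<2\ell_j$ then $I(j')\subseteq I$ and $|I(j')|\ge\beta\ell_j$ by $\beta$-agreeability, so $\alpha$-tightness gives contribution at least $\alpha\beta\ell_j$. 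You instead factor out the $r_j$/$d_j$ symmetry explicitly (splitting into $S_L$, $S_R$), and within $S_L$ apply the theorem to three different single intervals, with sub-cases on $\ell_{j'}$ and the position of $d_{j'}$. Both arguments hinge on the identical mechanism---construct intervals of length $\OO(\ell_j)$ to which every job in the class contributes $\Omega_{\alpha,\beta}(\ell_j)$---but the paper's use of a \emph{union} of intervals collapses your $S_L/S_R$ split and most of your sub-cases into one clean dichotomy, while your version only ever invokes the theorem for single intervals, which is marginally more elementary at the cost of more bookkeeping. A small simplification you could make: in the high-laxity branch, applying density to $I_3$ alone already covers the $d_{j'}\ge r_j+2\ell_j$ sub-case, since then $|I_3\cap I(j')|\ge 2\ell_j$ and the contribution is at least $2\ell_j-\ell_{j'}\ge\ell_j$, so the detour through $I_2$ is unnecessary.
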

\begin{proof}
Consider all such $j'$. We estimate their contributions to the interval(s) $$I=[r_j-2\ell_j,r_j+2\ell_j)\cup [d_j-2\ell_j,d_j+2\ell_j),$$ which has a total length of at most $8\ell_j$. There are two possibilities for $j'$.

\begin{enumerate}[\indent{Case }1:]
	\item We have $|I(j')|\ge 2\ell_j$. As $I(j')$ contains $r_j$ or $d_j$, we have $|I\cap I(j')|\ge 2\ell_j$. Given that $\ell_{j'}\leq\ell_j$, $j'$ contributes at least $2\ell_j-\ell_{j'}\ge \ell_j$ to $I$.
	\item We have $|I(j')|<2\ell_j$. As a consequence $I(j')\subseteq I$. Observe that $|I(j')|\geq \beta\ell_j$ holds because $j$ and $j'$ are $\beta$-agreeable. As $j'$ is $\alpha$-tight, its contribution to $I$ is at least $\alpha\beta\ell_j$.
\end{enumerate}

Let $n_1$ and $n_2$ be the number of jobs corresponding to the above two cases, respectively. Then the contribution of the $n_1+n_2$ jobs to $I$ is at least $$(n_1+\alpha\beta n_2)\cdot\ell_j\geq(n_1+n_2)\cdot\alpha\beta\ell_j.$$Using Theorem~\ref{thm: strong-density}, the total contribution is upper bounded by $m|I|\le 8m\ell_j$, implying $n_1+n_2\le 8m/\alpha\beta=\OO(m)$.\qed\end{proof}

Now, consider the \Mediumfit schedule. Any two jobs that are processed
at the same time are $\beta$-agreeable for $\beta=1/2$. By
Lemma~\ref{lem: strong-agreeable}, the number of such jobs---and, thus,
the number of required machines---is~$\OO(m)$. Therefore, we have an
$\OO(1)$-competitive algorithm for agreeable instances. 


Notice also that our final schedule is non-preemptive. \Mediumfit (for $\alpha$-tight jobs) is by definition a non-preemptive algorithm. \EDF
(for $\alpha$-loose jobs) on agreeable instances never preempts jobs that
have already started because the jobs released later have a
larger deadline. Since we compete with a weaker optimum in the
non-preemptive setting, our result carries over to this setting. 
 
\begin{theorem}
	On agreeable instances, there is an
        $\mathcal{O}(1)$-competitive algorithm for preemptive and
        non-preemptive online machine minimization.
\end{theorem}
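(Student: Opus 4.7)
The plan is to assemble the ingredients already assembled in this section. By Theorem~\ref{thm: black box}, I lose only a constant factor by reducing to the semi-online setting, so I assume $m$ is known. Fix a constant $\alpha\in(0,1)$ (e.g.\ $\alpha=1/2$) and split the arriving jobs into the $\alpha$-loose jobs and the $\alpha$-tight jobs, handling the two classes on disjoint pools of machines. This is legitimate because any agreeable sub-instance is still agreeable.

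For loose jobs I run \EDF on $\lceil 1/(1-\alpha)^2\rceil\cdot m = \OO(m)$ machines; Theorem~\ref{thm: EDF-small} guarantees feasibility. For tight jobs I run \Mediumfit, opening a fresh machine whenever every existing one is already occupied during $I_{1/2}(j)$. To bound the number of machines \Mediumfit uses, fix any time $t$ and let $S_t$ be the set of tight jobs whose $(1/2)$-intervals contain $t$. Since all their $(1/2)$-intervals pairwise intersect (they all contain $t$) and the instance is agreeable, any two jobs in $S_t$ are $(1/2)$-agreeable. Pick $j\in S_t$ of maximum laxity; every $j'\in S_t$ then satisfies $\ell_{j'}\le\ell_j$, so Lemma~\ref{lem: strong-agreeable} (applied with $\beta=1/2$) yields $|S_t|=\OO(m)$. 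Hence \Mediumfit uses $\OO(m)$ machines, and combined with the loose-jobs pool we obtain an $\OO(1)$-competitive preemptive algorithm.

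For the non-preemptive version, \Mediumfit is non-preemptive by definition. For \EDF on the $\alpha$-loose agreeable sub-instance, any job that is released while some other job is running necessarily has the largest deadline among currently unfinished jobs (by agreeability), so \EDF never preempts a job once it has started; hence the combined schedule is non-preemptive. Finally, every feasible non-preemptive offline schedule is also preemptive-feasible, so the $\OO(m)$ bound continues to hold against the non-preemptive optimum, yielding an $\OO(1)$-competitive non-preemptive algorithm.

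The only non-routine step is the $\OO(m)$ bound on $|S_t|$, and it has already been done in Lemma~\ref{lem: strong-agreeable}; the remainder of the proof is bookkeeping that combines Theorems~\ref{thm: black box} and~\ref{thm: EDF-small} with the \Mediumfit analysis.
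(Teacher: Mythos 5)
Your proof is correct and follows essentially the same route as the paper: reduce to semi-online via the doubling theorem, split into loose and tight jobs, handle loose jobs by \EDF and tight jobs by \Mediumfit, and bound the number of simultaneously running tight jobs via Lemma~\ref{lem: strong-agreeable} with $\beta=1/2$. The only (minor) extra detail you make explicit is the ``pick the job of maximum laxity'' step needed to satisfy the $\ell_{j'}\le\ell_j$ hypothesis of Lemma~\ref{lem: strong-agreeable}, which the paper leaves implicit.
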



\section{$\OO(\log m)$-Competitiveness for Laminar Instances}\label{sec: laminar}

In this section, we consider \textit{laminar} instances in which for
any two jobs $j,j'$ with \textit{overlapping} intervals, that is, $|I(j)\cap
I(j')|>0$, holds that either $I(j)\subseteq I(j')$ or $I(j')\supseteq
I(j)$. We prove the following result.


\begin{theorem}\label{thm: laminar}
On laminar instances, there is an $\mathcal{O}(\log m)$-competitive algorithm for online machine minimization.
\end{theorem}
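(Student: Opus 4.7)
The plan is to reduce, via Theorem~\ref{thm: black box} and Theorem~\ref{thm: EDF-small}, to the $\alpha$-tight semi-online laminar case for some fixed $\alpha\in(0,1)$, and then design a non-migratory algorithm: upon arrival, each job is assigned irrevocably to one machine, and on every machine we run $\EDF$ independently. Two things need to be established, namely (i) $\EDF$ is feasible on every machine and (ii) at most $\OO(m\log m)$ machines are ever opened.

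The assignment rule exploits laminarity. When job $j$ arrives, on each opened machine~$i$ let the ``parent'' $\pi_i(j)$ be the smallest previously assigned job on~$i$ whose interval contains $I(j)$; by laminarity such a parent, if it exists, is unique. A machine is \emph{eligible} for $j$ if either no previously assigned job on it overlaps $I(j)$ or $\ell_{\pi_i(j)}\ge |I(j)|$, i.e.\ the parent's laxity already accommodates the new job's entire time window. If no machine is eligible, a new one is opened. Among eligible machines one cannot just pick arbitrarily; instead I would use a bin-balancing rule: partition each candidate parent's laxity $\ell_{j'}$ into $\OO(\log m)$ bins indexed by size class $t$, where a job $j$ of size $|I(j)|\in[2^{-t-1}\ell_{j'},2^{-t}\ell_{j'})$ is routed to the bin of class $t$, and place $j$ into the least-loaded class-$t$ bin across all eligible parents.

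Feasibility is the easier part. By induction on the laminar forest one maintains, for each job $j'$ on machine~$i$, the invariant that the total length $\sum |I(k)|$ over jobs $k$ assigned to~$i$ and strictly contained in $I(j')$ is at most $\ell_{j'}$. Since $p_k\le |I(k)|$, this yields $p_{j'}+\sum p_k \le p_{j'}+\ell_{j'}=|I(j')|$, and the standard necessary-and-sufficient condition for feasibility of $\EDF$ on a single machine (check every interval of the form $[r_j,d_j)$) then delivers the feasibility of each machine's schedule.

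The hard part, where I expect most of the work to sit, is the $\OO(m\log m)$ bound on machines. Each opened machine can be charged to a ``saturation event'' at some size class $t$: at the moment it is opened, every eligible parent's class-$t$ bin is already full. Fixing $t$, I would invoke Theorem~\ref{thm: strong-density} on a union of intervals built from the saturating jobs at class $t$ together with their common ancestor: $\alpha$-tightness implies each such job contributes a constant fraction of its interval length, and a saturated bin forces enough such contributions to appear in an interval of bounded length so that $C(\cdot,I)/|I|$ would exceed $m$ if more than $\OO(m)$ class-$t$ events occurred. Summing over the $\OO(\log m)$ size classes then yields $\OO(m\log m)$ machines, i.e.\ an $\OO(\log m)$-competitive ratio. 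The subtle point---and the reason the balancing rather than plain greedy is needed---is to organize the charging so that each unit of optimum processing in Theorem~\ref{thm: strong-density} is charged only $\OO(1)$ times per class: a first-fit scheme would concentrate small jobs into the same bins and ruin this charge.
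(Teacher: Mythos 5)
Your high-level framework matches the paper's: reduce to semi-online $\alpha$-tight jobs, assign each job irrevocably to a machine at its release date, run $\EDF$ per machine, restrict admission by a balanced partition of each candidate's laxity into bins, and close the counting argument with Theorem~\ref{thm: strong-density}. But two of the three steps have genuine gaps. First, the feasibility invariant you claim---that for every $j'$ on a machine the total length of \emph{all} strictly contained jobs on that machine is at most $\ell_{j'}$---is not maintained by the rule you state. Your eligibility test ($\ell_{\pi_i(j)}\ge |I(j)|$ plus a class-$t$ bin with room) only constrains the interval length of the \emph{direct} children of each node to at most its laxity; summing over a nested chain one only gets roughly $2\ell_{j'}$, not $\ell_{j'}$, for deeper descendants. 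The paper avoids this altogether: it shows that whenever $j$ is preempted at time $t$ there must be a \emph{direct} user $j'$ of $j$ with $t\in I(j')$, so the preemption time of $j$ is bounded by the total interval length of $j$'s direct users alone, which the bin constraints keep at most $\ell_j$. Your argument either needs that chain-following observation, or a strictly stronger admission rule that checks every ancestor's cumulative slack (which then changes the counting problem).

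Second, and more seriously, the machine-count bound is not established by your saturation-event charging. The size class of $j$ relative to a candidate $j'$ is $t(j',j)=\lfloor\log_2(\ell_{j'}/|I(j)|)\rfloor$, which \emph{varies across the eligible candidates} $j_1\prec\dots\prec j_M$ because their laxities differ; there is no single $t$ for which ``every eligible parent's class-$t$ bin is full,'' so ``fixing $t$ and summing over $\OO(\log m)$ classes'' does not decompose the events. Even if one fixes a single candidate and a single class, the jobs filling its class-$t$ bin are scattered across the (possibly huge) interval $I(j')$ and need not lie near $I(j)$, so there is no bounded-length interval $I$ to which they all contribute $\Omega(|I|)$, and a one-shot application of Theorem~\ref{thm: strong-density} does not produce a contradiction. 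The paper's proof handles exactly this difficulty: bins are indexed by candidate position (the ``diagonal'' rule: smallest $i$ such that bin $i$ of the $i$-th candidate has room), a multi-level \emph{failure set} $F_0,\dots,F_{m'}$ is built by tracing users back through candidates, a blocking relation is shown to gain a factor $2$ every $\OO(m)$ levels (Lemma~\ref{lem: exp-inc}), and only then does Theorem~\ref{thm: strong-density} yield the contradiction, giving $m'=\OO(m\log m)$. Your sketch does not contain a substitute for that recursive amplification, and I do not see how to make the per-size-class charging work without it.
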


We assume that jobs are indexed from $1$ to $n$ in accordance with
their release dates, that is, $j<j'$ implies $r_j\le r_{j'}$. If
$r_j=r_{j'}$, we assume that $j<j'$ implies $d_j\ge d_{j'}$. We say 
that $j$ \textit{dominates} $j'$ (denoted as $j\succ j'$) if $j<j'$
and $I(j)\supseteq I(j')$. We also say $j'$ {\em is dominated} by $j$
and denote this as $j' \prec j$.

\subsection{Description of the Algorithm}

By Theorems~\ref{thm: black box} and~\ref{thm: EDF-small}, it again suffices to consider semi-online scheduling of $\alpha$-tight jobs for some fixed $\alpha\in(0,1)$.

\medskip

\noindent\textbf{Job Assignment.} We open $m'$ machines and will later
show that we can choose  $m'=\OO(m\log m)$. At every release date, we immediately assign a new job to its machine and we never
  revoke this decision. Thus, we will obtain a non-migratory
  schedule.

The assignment procedure is as follows. We consider an unassigned new
job~$j$. If there is a machine that is ``surely free'' at time $r_j$,
that is, there is no job $j'$ whose time window $I(j')$ contains~$r_j$,
then we assign $j$ to this machine. Otherwise we do the following.

We consider all previously assigned jobs $j'$ with $r_j\in
  I(j')$. Since the instance is laminar, there is on each machine a unique
  $\prec$-{\em minimal} job among them (unless intervals have
  identical time intervals, in which case we pick the one with the smallest index),  that is, a job that is not
  dominating any other job on this machine. Essentially, this is the job that has the
  smallest interval length $I(j')$ on a machine. We call each such job
  \emph{candidate}. Now, consider the laminar chain $c_1(j)
  \prec\dots\prec c_{m'}(j)$ of all candidates at time~$r_j$. Again, this chain exists because the instance is laminar. We call $c_i(j)$ the $i$-th \textit{candidate} of $j$. Notice
that $I(j)\subseteq I(c_i(j))$~for~all~$i$, that is, $j$ is dominated by all its candidates.

We want to assign $j$ to one of the machines and need to ensure that
it can feasibly be scheduled. To that end, we will find a
candidate $c_i(j)$ that has enough capacity (i.e., laxity) in its time window to fit
the {\em full} time interval $I(j)$ and assign $j$ to this candidate $c_i(j)$. Clearly, we have to carefully keep track of the
contribution of previously assigned jobs to $c_i(j)$ which reduces
the remaining capacity (i.e., the laxity) within the candidate's time
window. However,  a greedy assignment of jobs to their $\prec$-minimal
candidates fails, and we employ a more sophisticated balancing
scheme.

We partition the laxity of each candidate's time window into~$m'$
equally-sized bins. To assign $j$, we select the smallest $i$ such
that the $i$-th bin of the $i$'th candidate $c_i(j)$ still has a
remaining capacity of at least $|I(j)|$. 
If there is no such $i$ then the assignment of $j$ \textit{fails}.

To make this assignment more precise and to enable us to analyze the
procedure later,  we introduce some notation. We denote a job~$j$ that
has been assigned to its $i$-th candidate~$c_i(j)$ as the $i$-th
\textit{user} of $c_i(j)$. We denote the set of all $i$-th
users of a job $j'$ by $U_i(j')$, for any~$i$. All $i$-th
users of $j'$ will contribute to the $i$-th bin of $j'$. As stated earlier, to
assign $j$, we select
the smallest $i$ such that the $i$-th bin of $c_i(j)$ still has a
remaining capacity of at least $|I(j)|$, that is, 
\begin{equation}\label{eq: alg-laminar}|I(U_i(c_i(j)))\cup I(j)|\leq\frac{\ell_{c_i(j)}}{m'}.\end{equation}
If we find such an $i$ and assign~$j$, then the capacity of the $i$-th
bin of $c_i(j)$ reduces by $|I(j)|$. Otherwise, the assignment of $j$
\textit{fails}.

\medskip
\noindent\textbf{Scheduling.} Given a job-to-machine assignment, we run \EDF on each machine separately.

\subsection{Analysis of the Algorithm}

We first show that our algorithm obtains a
feasible schedule if no job assignment has failed. Then we give a proof of the fact that the job assignment never fails
on instances that admit a feasible schedule. 

\begin{lemma}\label{lem: laminar-scheduling}
	If the job assignment does not fail, our algorithm schedules feasibly.\end{lemma}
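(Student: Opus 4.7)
The plan is to reduce to single-machine feasibility and invoke the optimality of \EDF on one machine. Since \EDF is optimal for preemptive single-machine scheduling with release dates and deadlines, it suffices to show that on each machine $M$ the set of jobs assigned to $M$ admits a feasible single-machine schedule. The standard characterization of this feasibility is that for every interval $[a,b)$, the total processing time of jobs on $M$ with $I(j)\subseteq[a,b)$ is at most $b-a$.

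The core of the argument is an inductive claim tailored to the laminar structure: for every job $c$ on $M$, define $T(c)\coloneqq\sum_{j\text{ on }M,\, I(j)\subseteq I(c)} p_j$; then $T(c)\leq|I(c)|$. I would proceed by induction on the laminar depth of $c$ within $M$. The base case, when $c$ is $\prec$-minimal on $M$, gives $T(c)=p_c\leq|I(c)|$ immediately. The inductive step rests on two structural observations. (a) Every job $j$ on $M$ strictly dominated by $c$ is either a user of $c$ or is strictly dominated by some user of $c$: when $j$ was assigned, its candidate on $M$ was the $\prec$-minimal job on $M$ whose interval contains $r_j$, and this candidate must be $\preceq c$; if it is not $c$ itself, a short induction on release dates pushes $j$ down onto a user of $c$. (b) The users of $c$ on $M$ have pairwise disjoint intervals: once a user $u_1$ of $c$ is installed on $M$, any later job $u_2$ arriving at $r_{u_2}\in I(u_1)$ has $u_1$ (or something $\prec u_1$) as its $\prec$-minimal candidate on $M$, so $c$ does not appear in $u_2$'s candidate chain and $u_2$ cannot be a user of $c$.

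Given (a) and (b), $T(c)=p_c+\sum_{u\text{ user of }c} T(u)$. By the inductive hypothesis $T(u)\leq|I(u)|$, and by (b) the intervals $I(u)$ are pairwise disjoint, so $\sum_u|I(u)|=\bigl|\bigcup_u I(u)\bigr|=\bigl|\bigcup_i I(U_i(c))\bigr|\leq\sum_i |I(U_i(c))|$. Applying the assignment rule~(\ref{eq: alg-laminar}) bin by bin yields $|I(U_i(c))|\leq\ell_c/m'$, so summing over the $m'$ bins gives $\sum_u|I(u)|\leq\ell_c$. Therefore $T(c)\leq p_c+\ell_c=|I(c)|$.

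To conclude, fix any interval $[a,b)$ and consider the jobs on $M$ whose windows lie in $[a,b)$. Because the jobs on $M$ form a laminar family, the $\prec$-maximal such jobs have pairwise disjoint windows contained in $[a,b)$, and summing the bound $T(\cdot)\leq|I(\cdot)|$ over these roots bounds their total processing time by $b-a$. Thus the single-machine feasibility criterion holds on every machine, and \EDF feasibly schedules all assigned jobs. The main difficulty is verifying the structural observation~(b) carefully---in particular, the tie-breaking rule when two candidates share an interval---since $T(c)\leq|I(c)|$ would fail if the users of $c$ on $M$ could overlap.
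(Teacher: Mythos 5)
Your proof is correct but takes a genuinely different route from the paper's. The paper argues locally: it fixes an arbitrary job $j$, observes that any time at which $j$ is preempted must lie in the interval of some \emph{user} of $j$ (since \EDF on a laminar family runs a $\prec$-minimal job, and following the assignment chain from the running job upward must pass through $j$), and then bounds the total preemption time by $\sum_{i=1}^{m'}|I(U_i(j))|\le m'\cdot\ell_j/m'=\ell_j$, so $j$ receives at least $p_j$ units. You instead prove the stronger statement that the set of jobs assigned to each machine is single-machine \emph{feasible} in the classical sense, via the inductive volume bound $T(c)\le|I(c)|$, and then invoke the optimality of single-machine preemptive \EDF. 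Both proofs hinge on the same bin-capacity inequality~\eqref{eq: alg-laminar}, but the decompositions differ: the paper's is a per-job preemption-time bound that relies on the tie-breaking of \EDF coinciding with $\prec$-minimality (as noted in Section~\ref{sec: general}); yours is a per-machine Hall-type volume argument that works for any optimal single-machine policy and does not depend on which tie-breaking rule \EDF uses. Your structural claims (a) and (b) are correct, and the careful treatment of jobs with identical intervals via the index-based tie-break in the definition of $\prec$ is exactly the point one must check; the paper glosses over this step. Your approach is longer but yields a cleaner invariant and is more robust; the paper's is shorter and more direct.
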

\begin{proof}
Consider some job $j$ and note that, whenever it is preempted at some time $t$, there must be a user $j'$ of $j$ with $t\in I(j')$.
According to Inequality~\eqref{eq: alg-laminar}, $j$ is thus preempted for no longer than $$\sum_{i=1}^{m'} |I(U_i(j))|\leq
m'\cdot\frac{\ell_j}{m'}=\ell_j.$$Therefore, it can be processed for $p_j$ time units.
\qed\end{proof}
%
%
It remains to show that the job assignment never fails for some
$m'=\OO(m\log m)$. The proof idea is as follows: We assume that the
algorithm fails to assign a job. We select a critical job set, the
{\em failure set}, which is the set of jobs that contribute (directly or indirectly) to
  the $m'$-th bin of the $m'$-th candidate of the failing job. We take
into account the direct contribution by users of that bin and the
indirect contribution by jobs that filled lower-indexed bins of lower-indexed candidates and thus indirectly contributed to the filling of
the last possible bin. With this failure set we derive a
contradiction using a load argument. 

Let  $j^\star$ be a job whose assignment fails. We iteratively define
its \textit{failure set} $F\subseteq J$ as the union of
the $m'+1$ different sets $F_0,\dots,F_{m'}$. 
To define the base set $F_0$, we also use auxiliary sets $F^{0}_0,\dots,F^{m'}_0$.  

We initialize $F_0^{m'}\coloneqq \{j^\star\}$. Given $F_0^i$, we
construct $F_{i}$ and $F^{i-1}_0$ in the following way. First, $F_{i}$
is defined to be the set of all $\prec$-maximal $i$-th candidates of
jobs in $F_0^i$. Subsequently, $F_0^{i-1}$ is constructed by incrementing the set of users in $F_0^i$ by the $i$'th users of
  jobs in $F_i$. 
Formally, 
\begin{align*}
	&{F}_{i}\coloneqq M_{\prec}\left(\{c_{i}(j) \mid j\in F_0^{i}\}\right)\\
	\text{and }&F_0^{i-1}\coloneqq F_0^{i}\cup \bigcup_{j\in {F}_{i}}U_{i}(j),
\end{align*}
where $M_{\prec}$ is the operator that chooses the $\prec$-maximal
elements from a set of jobs: $M_{\prec}(S):=\{j\in S\mid \nexists
j':j\prec j'\}.$ After $m'$ such iterations, that is, when
$F_0^0,F_1,\dots,F_{m'}$ have been computed, we set $F_0\coloneqq
M_{\prec}(F_0^0)$. Our failure set is $F_0\cup F_1\cup \ldots
  \cup F_{m'}$.

We show some properties of failure sets which will be useful for the
proofs later. To that end, we define the following notation. For two job sets $S_1$ and $S_2$, we write $S_1\prec S_2$ if, for each $j_1\in S_1$, there is a $j_2\in S_2$ with $j_1\prec j_2$. Moreover, a job set $S$ is said to \textit{$\gamma$-block} a job $j$ if there is a subset $S'\subseteq S$ with $S'\prec \{j\}$ and $|I(S')|\geq\gamma\ell_j$. Finally, $S_1$ $\gamma$-\textit{blocks} $S_2$ if it $\gamma$-blocks every job in $S_2$.

\begin{lemma}\label{lem: fs-bp}
Every failure set $F$ has the following properties.
\begin{enumerate}[(i)]
	\item We have $F_0\prec\dots\prec F_{m'}$.
	\item The sets $F_0,\dots,F_{m'}$ are pairwise disjoint.
	\item For all $i=1,\dots,m'$, $F_0$ $\frac{1}{m'}$-blocks $F_{i}$.
\end{enumerate}
\end{lemma}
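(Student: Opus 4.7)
The plan is to prove the three properties in the stated order, with (i) providing the backbone for (ii) and (iii). For (i), the key observation is that the candidates of any job form a $\prec$-chain $c_1(j)\prec c_2(j)\prec\dots$. Pick $j'\in F_{i-1}$; then $j' = c_{i-1}(j)$ for some $j\in F_0^{i-1}=F_0^i\cup\bigcup_{k\in F_i}U_i(k)$. If $j\in F_0^i$, the candidate $c_i(j)$ belongs to the set whose $\prec$-maximal elements form $F_i$, so some $j''\in F_i$ satisfies $c_i(j)\preceq j''$, yielding $j'\prec c_i(j)\preceq j''$. If instead $j\in U_i(k)$ for some $k\in F_i$, then by definition $c_i(j)=k$ and hence $j'\prec k$. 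An analogous dichotomy applied to $F_0^0=F_0^1\cup\bigcup_{k\in F_1}U_1(k)$ will handle $F_0\prec F_1$.

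For (ii), iterating (i) with transitivity of $\prec$ will give $F_i\prec F_\ell$ for every $0\le i<\ell\le m'$. If some $j$ lay in $F_i\cap F_\ell$, there would be $j''\in F_\ell$ with $j\prec j''$; but both $j$ and $j''$ sit inside the candidate set whose $\prec$-maximal elements form $F_\ell$, contradicting the maximality of $j$ there.

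For (iii), I first locate within $F_0^i$ a witness to the failing condition at level $i$: fix $j\in F_i$, so $j=c_i(j')$ for some $j'\in F_0^i$. Unrolling the recurrence, $F_0^i=\{j^\star\}\cup\bigcup_{\ell>i}\bigcup_{k\in F_\ell}U_\ell(k)$, so either $j'=j^\star$ (whose assignment failed at every level, including $i$) or $j'$ was placed as the $\ell$-th user of some element of $F_\ell$ with $\ell>i$ (so its placement attempt failed at every level below $\ell$, including $i$). Either way, at time $r_{j'}$ we have $|I(U_i(j))\cup I(j')|>\ell_j/m'$, and since $U_i(j)$ only grows this inequality holds for the final $U_i(j)$. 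Setting $S:=U_i(j)\cup\{j'\}$ gives $|I(S)|>\ell_j/m'$, $S\subseteq F_0^{i-1}\subseteq F_0^0$, and $S\prec\{j\}$ (users of $j$ lie $\prec j$, and $j'\prec c_i(j')=j$).

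Finally I promote $S$ to a subset of $F_0=M_\prec(F_0^0)$: for each $s\in S$ pick a $\prec$-maximal $f(s)\in F_0$ with $s\preceq f(s)$ and set $S':=\{f(s):s\in S\}$. Since $I(s)\subseteq I(f(s))$, one immediately gets $|I(S')|\ge|I(S)|>\ell_j/m'$. The delicate step---and what I expect to be the main obstacle---is verifying $S'\prec\{j\}$. I plan to rule out $f(s)\succeq j$ as follows: by (i) and transitivity, every element of $F_0$ is strictly $\prec$-dominated by some element of $F_i$, so $f(s)\succ j$ would yield $j\prec f(s)\prec f_i$ for some $f_i\in F_i$, contradicting that $j$ is $\prec$-maximal in $F_i$'s defining candidate set. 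Laminarity combined with the nonempty containment $I(s)\subseteq I(j)\cap I(f(s))$ then forces $I(f(s))\subseteq I(j)$, and $f(s)\neq j$ by (ii), so $f(s)\prec j$ as required.
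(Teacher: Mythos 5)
Your proof is correct and follows the same overall strategy as the paper's: prove (i) via the chain structure of candidates and the definition of the auxiliary sets $F_0^i$, derive (ii) from (i) and $\prec$-maximality, and obtain (iii) by observing that the failing bin condition $|I(U_i(c_i(j'))\cup\{j'\})|>\ell/m'$ holds and then lifting the witness set into $F_0$. The one place you go beyond the paper is the final promotion step in (iii): the paper simply asserts that one can find $F_0'\subseteq F_0$ with $|I(F_0')|\geq\ell_{j'}/m'$ and leaves implicit why $F_0'\prec\{j'\}$ also holds, whereas you correctly flag this as the delicate point and supply the argument (ruling out $f(s)\succ j$ via (i) and $\prec$-maximality of $j$ in $C_i$, then invoking laminarity on the overlapping intervals $I(j)$ and $I(f(s))$ and disjointness from (ii) to conclude $f(s)\prec j$). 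This is a genuine detail that strengthens the exposition; nothing in your proof deviates in substance from the paper's intended argument.
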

\begin{proof}

To see (i), we define $C_i\coloneqq\{c_i(j)\mid j\in F_0^i\}$, for every $1\leq i\leq m'$, and $C_0\coloneqq F^0_0$. We first show $C_{i-1}\prec C_{i}$ for every $1\le i\leq m'$. For $i=1$, this directly follows from the construction. Consider $2\leq i\leq m'$. Let $j$ be an arbitrary job in $C_{i-1}$, which is then an $(i-1)$-th candidate of some job in $F^{i-1}_0$, say, job $j'$. According to the construction, we have $j'\in F^i_0$, or $j'$ is the $i$-th user of some job in $C_i$. In both cases, $C_i$ contains the $i$-th candidate of job $j'$, which dominates job~$j$. Hence, $C_{i-1}\prec C_{i}$. 
Since $F_{i-1}$ and $F_i$ are obtained from $C_{i-1}$ and $C_i$ only by deleting dominated jobs, $F_{i-1}\prec F_i$ follows. 

Consider property (ii) and suppose there is some $j\in F_{i}\cap F_{i'}$ for some $i<i'$. Then, by (i), there is also a job $j'\in F_{i'}$ with $j\prec j'$, contradicting the fact that $F_{i'}$ only contains $\prec$-maximal elements.

Consider property (iii). Let $1\leq i\leq m'$ and $j'\in F_i$ be an
arbitrary job. We show that there is some $F'_0\subseteq F_0$ with
$F'_0\prec \{j'\}$ and $|I(F'_0)|\geq \ell_{j'}/m'$. By the construction
of $F_i$, we have $j'=c_i(j)$ for some $j\in F_0^{i}$. Recall the
construction of $F_0^i$: We have $j=j^\star$, or $j$ is the $i'$-th user of some job where $i'\ge i+1$. In both cases, as our algorithm assigns $j$ to the machine of the smallest possible candidate, $j$ could not be assigned to the same
machine as $j'$, which is its $i$-th candidate. Thus, when our
algorithm assigned job $j$, the capacity of the $i$-th bin of $j'$ did not have enough remaining capacity for $j$ to become its user. Hence, $$ |I(U_i(j')\cup\{j\})|>\ell_{j'}/m'.$$ Note that $U_i(j')\cup\{j\}\subseteq F_0^0$ holds and that $U_i(j')\cup\{j\}$ consists of pairwise disjoint jobs. As $F_0$ is obtained by only deleting jobs in $F_0^0$ that are dominated by others, we can always find $F_0'\subseteq F_0$ with $|I(F'_0)|\geq \ell_{j'}/m'$.
\qed\end{proof}

Using a load argument, we can even derive stronger blocking relations.

\begin{lemma}\label{lem: exp-inc}
For any fixed $\alpha\in (0,1)$, there exists $k=\OO(m)$ with the following property. If $F_i$ $\gamma$-blocks $F_{h}$, then $F_{i+k}$ $2\gamma$-blocks $F_{h}$, for all $h>i+k$.
\end{lemma}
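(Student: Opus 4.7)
The plan is a volume/density argument. Fix any $j \in F_h$ and, for each $i' \in \{i, i+1, \dots, i+k\}$, set $L_{i'} := |I(F_{i'} \cap \{a : a \prec j\})|$. Because $F_i$ $\gamma$-blocks~$j$, the witnessing subset sits inside $F_i \cap \{\prec j\}$, yielding $L_i \geq \gamma \ell_j$. I will pick $k := \lceil 2m/\alpha\rceil - 1 = \OO(m)$ and argue that $L_{i+k} \geq 2\gamma \ell_j$; then $T := F_{i+k} \cap \{\prec j\}$ witnesses that $F_{i+k}$ $2\gamma$-blocks~$j$, and this bound holds uniformly over $j \in F_h$.

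The first step is to establish the monotonicity $L_i \leq L_{i+1} \leq \cdots \leq L_{i+k}$ via an ancestor map. Given $a \in F_{i'} \cap \{\prec j\}$ with $i' < i+k$, property~(i) of Lemma~\ref{lem: fs-bp} supplies some $b \in F_{i'+1}$ with $a \prec b$. Laminarity applied to $I(a) \subseteq I(b) \cap I(j)$ forces either $I(b) \subseteq I(j)$ (so $b \prec j$, as desired) or $I(j) \subseteq I(b)$; in the latter subcase $j \prec b$, and iterating property~(i) up to level~$h$ yields some $\tilde b \in F_h$ with $b \prec \tilde b$, hence $j \prec \tilde b$, contradicting the $\prec$-maximality of $F_h$. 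Thus the ancestor map sends $F_{i'} \cap \{\prec j\}$ into $F_{i'+1} \cap \{\prec j\}$, and since jobs sharing a common image $b$ have disjoint intervals inside $I(b)$, summing over $b$ gives $L_{i'} \leq L_{i'+1}$.

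The second step is the density bound. Set $W := I(F_{i+k} \cap \{\prec j\})$, a union of pairwise disjoint intervals (by $\prec$-maximality of $F_{i+k}$ in the laminar instance) of total length $L_{i+k}$. Theorem~\ref{thm: strong-density} bounds the total contribution of the instance to $W$ by $m L_{i+k}$. By property~(ii), the levels $F_i, F_{i+1}, \dots, F_{i+k}$ are pairwise disjoint, so their contributions to $W$ add. Iterated ancestors place every $a \in F_{i'} \cap \{\prec j\}$ inside $W$, so $C(a, W) = p_a \geq \alpha |I(a)|$ by $\alpha$-tightness, and the contribution of $F_{i'} \cap \{\prec j\}$ to $W$ is at least $\alpha L_{i'} \geq \alpha L_i$. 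Summing over the $k+1$ levels and combining with the upper bound gives $(k+1)\alpha L_i \leq m L_{i+k}$; the choice of $k$ then yields $L_{i+k} \geq 2 L_i \geq 2\gamma \ell_j$.

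I expect the main obstacle to be the first step: carefully verifying that lifting a job in $\{\prec j\}$ to $F_{i'+1}$ stays inside $\{\prec j\}$. This is exactly where both laminarity and the fact that $F_h$ contains only $\prec$-maximal elements are needed together. Once this structural claim is in place, the remainder is a routine $\alpha$-tightness-plus-density computation that turns a constant number of ``extra'' levels into a doubling.
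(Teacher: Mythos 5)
Your proof is correct and follows essentially the same route as the paper's: fix $j\in F_h$, track the subsets $F^\star_{i'}=F_{i'}\cap\{\prec j\}$, use property~(i) of Lemma~\ref{lem: fs-bp} plus laminarity to get the inclusion chain $I(F^\star_i)\subseteq\cdots\subseteq I(F^\star_{i+k})$, and then apply $\alpha$-tightness, disjointness (property~(ii)), and Theorem~\ref{thm: strong-density} to force $|I(F^\star_{i+k})|\geq 2|I(F^\star_i)|$ after $\OO(m)$ levels. The one place you add value over the paper is the explicit ``ancestor map'' argument for the inclusion chain (ruling out $j\prec b$ via iterated property~(i) up to $F_h$ and its $\prec$-maximality), which the paper only asserts; the rest is the same density computation.
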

\begin{proof}
Consider some arbitrary $j\in F_{h}$. For every
$i'\in\{i,\dots,i+k\}$, we denote by $F_{i'}^\star$ the subset of
$F_{i'}$ consisting of jobs that are dominated by $j$. We prove that,
for $k=\OO(m)$ the inequality $|I(F^\star_{i+k})|\geq 2|I(F^\star_i)|$
holds. By Lemma~\ref{lem: fs-bp} (i) and (ii), we have the chain
$I(F^\star_{i})\subseteq\dots\subseteq I(F^\star_{i+k})$. As all of
these jobs are $\alpha$-tight, we can thus lower bound the
processing-time contribution of $F^\star_{i'}$ to $I(F^\star_{i+k})$ by $\alpha|I(F^\star_{i'})|\geq \alpha|I(F^\star_{i})|$, for all $i'\in\{i,\dots,i+k\}$. The contributions of all these jobs sum up to at least $k\alpha|I(F^\star_{i})|$, which must not exceed $m|I(F^\star_{i+k})|$ by Theorem~\ref{thm: strong-density}. Hence, for $k>2m/\alpha=\OO(m)$ we have $|I(F^\star_{i+k})|\geq 2|I(F^\star_i)|$, which implies the claim.
\qed\end{proof}


\begin{lemma}
	There exists $m'=\OO(m\log m)$ such that the assignment never fails.
\end{lemma}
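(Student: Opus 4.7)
Suppose for contradiction that the assignment of some job fails and construct the failure set $F_0\cup F_1\cup\ldots\cup F_{m'}$. My plan is to iteratively apply Lemma~\ref{lem: exp-inc} to amplify the initial blocking guarantee of Lemma~\ref{lem: fs-bp}~(iii) into a contradiction. The initial guarantee states that $F_0$ $\frac{1}{m'}$-blocks $F_{m'}$. Each application of Lemma~\ref{lem: exp-inc} (with its $k=\OO(m)$) doubles the blocking factor and advances the index by $k$; after $t$ iterations, $F_{tk}$ $\frac{2^t}{m'}$-blocks $F_{m'}$, provided $tk<m'$. Choosing $m'=cm\log m$ for a sufficiently large constant $c$ (relative to the constant hidden in $k=\OO(m)$) permits $t=\Theta(\log m)$ such iterations.

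Next, I would fix some $j\in F_{m'}$ and pass to the chain of subsets $F'_{ik}(j)\coloneqq\{j'\in F_{ik}\mid j'\prec j\}$ for $i=0,1,\ldots,t$. Each such set is a $\prec$-antichain of pairwise disjoint intervals lying in $I(j)$. The proof of Lemma~\ref{lem: exp-inc} shows moreover that these interval unions are nested, $I(F'_0(j))\subseteq I(F'_k(j))\subseteq\ldots\subseteq I(F'_{tk}(j))$, and by Lemma~\ref{lem: fs-bp}~(ii) the sets themselves are pairwise disjoint across levels. The iterated blocking gives $|I(F'_{ik}(j))|\geq 2^i\ell_j/m'$, and since every job involved is $\alpha$-tight, its contribution to any interval containing it is at least $\alpha$ times its own interval length.

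For the final contradiction I would apply Theorem~\ref{thm: strong-density} to a carefully chosen union of intervals, either $I(j)$ itself or the smaller $I(F'_{tk}(j))$. Summing the processing-time contributions of all the $F'_{ik}(j)$'s plus that of $j$, the total load should exceed $m$ times the length of the chosen interval, which is impossible. For $c$ chosen large enough this strict inequality is forced, completing the proof.

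The main obstacle, and the reason a single chain of indices is not quite sufficient, is that $|I(j)|/\ell_j=1+p_j/\ell_j$ may be arbitrarily large even for an $\alpha$-tight $j$, so blocking $F_{tk}$ on its own does not saturate the interval $I(j)$. To overcome this I would accumulate contributions across all $m'$ levels of the failure set rather than only the multiples of $k$: by Lemma~\ref{lem: fs-bp}~(i), every intermediate level $F_i$ inherits the blocking guarantee of the largest multiple of $k$ at most $i$, so the total cumulative interval mass grows like $\Theta\!\bigl(k\cdot 2^{m'/k}\ell_j/m'\bigr)$. Once $m'=\Theta(m\log m)$ with a sufficiently large hidden constant, this quantity overwhelms any affine function of $p_j+\ell_j$, and Theorem~\ref{thm: strong-density} delivers the contradiction.
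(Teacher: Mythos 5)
Your amplification step matches the paper exactly: starting from Lemma~\ref{lem: fs-bp}~(iii) and iterating Lemma~\ref{lem: exp-inc} $\OO(\log m')$ times, each iteration advancing the level index by $k=\OO(m)$ and doubling the blocking factor. That is the right setup. However, the final step---where the contradiction must come from---has a genuine gap, and you in fact identify the obstacle (that $|I(j)|/\ell_j$ can be arbitrarily large) without resolving it.

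Your proposed fix does not work. Summing $|I(F'_{ik}(j))|\geq 2^i\ell_j/m'$ over all intermediate levels gives a geometric series whose total is $\Theta(k\cdot2^{m'/k}\ell_j/m')$; with $k=\Theta(m)$ and $m'=\Theta(m\log m)$ the exponent $m'/k=\Theta(\log m)$, so $2^{m'/k}=m^{\OO(1)}$ and the sum is $\operatorname{poly}(m)\cdot\ell_j$. This is a bound that scales only with $\ell_j$ and polynomially with $m$; it does \emph{not} ``overwhelm any affine function of $p_j+\ell_j$,'' because $p_j$ can exceed any fixed polynomial in $m$ times $\ell_j$. Applying Theorem~\ref{thm: strong-density} to $I(j)$ requires beating $m(p_j+\ell_j)$, which is impossible when $p_j\gg\ell_j$. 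Passing to $I(F'_{tk}(j))$ also fails: you only have a \emph{lower} bound on $|I(F'_{tk}(j))|$, not an upper bound, so there is no contradiction with the density theorem to extract.

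The paper's resolution is to reverse the role of the two quantities. Choose $\lambda=\OO(m\log m')$ so that $F_\lambda$ $2$-blocks every $F_i$ with $i>\lambda$, and apply Theorem~\ref{thm: strong-density} to the fixed interval $I(F_\lambda)$. For each $j\in F_i$ with $i>\lambda$, the $2$-blocking gives $|I(F_\lambda)\cap I(j)|\geq2\ell_j$, so by the \emph{contribution formula} (not $\alpha$-tightness) the contribution of $j$ to $I(F_\lambda)$ is $|I(F_\lambda)\cap I(j)|-\ell_j\geq|I(F_\lambda)\cap I(j)|/2$. Because the jobs in $F_i$ are a $\prec$-antichain their intervals are disjoint, and because $F_\lambda\prec F_i$ their intervals cover $I(F_\lambda)$, so each level $F_i$ contributes at least $|I(F_\lambda)|/2$. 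Summing over the $m'-\lambda$ remaining levels gives a \emph{linear} accumulation $(m'-\lambda)|I(F_\lambda)|/2\leq m|I(F_\lambda)|$, forcing $m'\leq\lambda+2m=\OO(m\log m')$, hence $m'=\OO(m\log m)$. The exponential amplification is only used to reach the $2$-blocking threshold; the contradiction itself is linear, which is precisely what sidesteps the $p_j/\ell_j$ issue you noticed.
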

\begin{proof}
We will choose $m'=\OO(m\log m)$ such that we get a contradiction to the existence of the failure set $F$. Using Lemma~\ref{lem: fs-bp} (iii) and Lemma~\ref{lem: exp-inc}, we can choose $\lambda=\OO(m\log m')$ such that $F_\lambda$ $2$-blocks every $F_i$ with $i>\lambda$. Now we lower bound the total contribution of these $F_i$ to $I(F_\lambda)$: For any $j\in F_i$ where $i\geq\lambda$, its contribution to $I(F_\lambda)$ is $$|I(F_{\lambda})\cap I(j)|-\ell_{j}\ge |I(F_{\lambda})\cap I(j)|/2$$ since $|I(F_{\lambda})\cap I(j)|\ge 2\ell_{j}$. Thus we get as total contribution of $F_i$ to $F_\lambda$ at least
\begin{align*}
	\sum_{j\in F_i}\frac{|I(F_{\lambda})\cap I(j)|}{2} &\ge \frac{1}{2}\cdot \Big|I(F_{\lambda})\cap \big(\bigcup_{j\in F_k}I(j)\big)\Big|
          = \frac{|I(F_{\lambda})\cap I(F_i)|}{2}\ge \frac{|I(F_{\lambda})|}{2}\,,
\end{align*}
where the last inequality follows from $F_{\lambda}\prec F_{i}$. Therefore, the total contribution of all $F_i$ for $i>\lambda$ is at least $(m'-\lambda)|I(F_\lambda)|/2$, which is at most $m|I(F_\lambda)|$ by Theorem~\ref{thm: strong-density}. Thus, we must have $m'\leq \lambda+2m=\OO(m\log m')$, and appropriately choosing $m'=\OO(m\log m)$ yields a contradiction.
\qed\end{proof}

As stated earlier, combining this lemma with Lemma~\ref{lem: laminar-scheduling} shows Theorem~\ref{thm: laminar}.

\section{$\OO(m^2\log m)$-Competitiveness in the General Case}\label{sec: general}

Now, we generalize the methods introduced above 
and prove our main result.


\begin{theorem}\label{thm: general}
There is an $\mathcal{O}(m^2\log m)$-competitive algorithm for online machine minimization.
\end{theorem}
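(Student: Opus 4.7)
The plan is to merge the techniques from Sections~\ref{sec: agreeable} and~\ref{sec: laminar}. As in earlier sections, Theorem~\ref{thm: black box} reduces to the semi-online setting and Theorem~\ref{thm: EDF-small} lets me schedule $\alpha$-loose jobs on a separate set of $\OO(m)$ machines via \EDF, so it suffices to design an $\OO(m^2\log m)$-competitive algorithm for $\alpha$-tight jobs.

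My strategy is to use the laminar assignment of Section~\ref{sec: laminar} as a template, but replace every single machine of that algorithm by a \emph{group} of $\OO(m)$ machines. The motivation is that on a general instance the previously assigned jobs whose time windows contain a new release date $r_j$ need not be laminar, so the unique $\prec$-minimal candidate per machine used in Section~\ref{sec: laminar} is no longer well defined; grouping $\OO(m)$ machines together absorbs this non-laminarity at the price of an extra factor of $m$ in the competitive ratio. For each group I would maintain $g$ laxity bins, with $g$ denoting the total number of groups. Upon the arrival of $j$ I would extract one candidate per group (using a carefully chosen tie-breaking in the absence of laminarity), form a chain $c_1(j)\prec\dots\prec c_g(j)$, and place $j$ in the smallest pair (group, bin) whose remaining capacity accommodates $|I(j)|$, just as in Inequality~\eqref{eq: alg-laminar}. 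Inside each group, jobs would then be scheduled by the non-preemptive \Mediumfit rule of Section~\ref{sec: agreeable}.

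The analysis then splits into two parts. First, $\OO(m)$ machines per group suffice: the jobs that share a group at any given time were all assigned underneath a common candidate, which nests their $1/2$-intervals inside that of the candidate and makes them pairwise $\beta$-agreeable for $\beta=1/2$, so Lemma~\ref{lem: strong-agreeable} bounds their number by $\OO(m)$. Second, $g=\OO(m\log m)$ groups suffice: I would re-run the failure-set construction of Section~\ref{sec: laminar} almost verbatim, with groups now playing the role of machines, so that a failed assignment produces $F_0\prec F_1\prec\dots\prec F_g$ satisfying the analogues of Lemma~\ref{lem: fs-bp} and Lemma~\ref{lem: exp-inc}, and Theorem~\ref{thm: strong-density}'s density bound again yields a contradiction for $g=\OO(m\log m)$. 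Multiplying the $\OO(m\log m)$ groups by the $\OO(m)$ machines per group gives the claimed $\OO(m^2\log m)$ bound.

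The main obstacle I foresee is choosing the per-group candidate in a non-laminar setting so that the analogues of Lemma~\ref{lem: fs-bp}(i)--(iii) continue to hold. With laminarity gone, two would-be candidates inside a group may overlap without being nested, and the candidate chain $c_1(j)\prec\dots\prec c_g(j)$ cannot be read off the interval structure alone. A plausible fix is to let the group's candidate be the job in the group whose time window contains $r_j$ and has the smallest interval length, with ties broken by latest release date; the delicate technical step is then to verify that this modified definition still supports the blocking relation used in Lemma~\ref{lem: fs-bp}(iii) and, together with the $\alpha$-tightness bookkeeping, makes the exponential-growth step of Lemma~\ref{lem: exp-inc} go through so that the density argument closes the proof.
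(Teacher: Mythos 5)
Your high-level structure matches the paper's (groups of $\OO(m)$ machines, a laminar-style bin-packing assignment to candidates, agreeable-style scheduling inside each group), but there are two genuine gaps, and the arithmetic should already warn you: $\OO(m\log m)$ groups $\times$ $\OO(m)$ machines $=\OO(m^2\log m)$ \emph{machines}, which is an $\OO(m\log m)$ competitive \emph{ratio} — strictly stronger than the theorem claims.

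The first gap is the candidate chain. You assume you can ``extract one candidate per group'' and obtain $c_1(j)\prec\dots\prec c_g(j)$, but in a non-laminar instance the $\prec$-minimal jobs covering $r_j$ drawn from \emph{different} groups need not be pairwise nested, so a chain of length $g$ simply does not exist in general. Your proposed fix (pick the smallest-interval job per group, break ties by latest release) only addresses ambiguity \emph{within} a group, not the cross-group incomparability. The paper confronts exactly this: it builds the chain greedily and proves in Lemma~\ref{lem: chain-selection} that only a chain of length $\tau=\Omega(g/m)$ is guaranteed (each greedy step can ``lose'' $\OO(m)$ incomparable jobs), which is precisely where the extra factor of $m$ enters, giving $g=\OO(m\tau)=\OO(m^2\log m)$ groups and the stated $\OO(m^2\log m)$ competitive ratio. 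The second gap is the per-group scheduler. You claim that jobs sharing a group at a given time ``were all assigned underneath a common candidate, which $\dots$ makes them pairwise $\beta$-agreeable,'' but nesting inside a common candidate's window does not make jobs pairwise agreeable — two users can be strictly nested inside each other. Hence \Mediumfit, which never preempts and always opens a new machine on conflict, has no $\OO(m)$ bound per group. The paper instead schedules, at each time $t$, only the $\prec$-\emph{minimal} unfinished jobs $j$ with $t\in I_\delta(j)$, so simultaneously running jobs are pairwise incomparable and hence $\delta$-agreeable; this preemptive rule is what makes Lemma~\ref{lem: strong-agreeable} applicable and also requires a separate argument (via Lemma~\ref{lem: delta-half-agreeable}) that the total preemption a job suffers is at most $(1-2\delta)\ell_j$.
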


It is not clear if an online separation into a reasonable number of laminar and agreeable
sub-instances exists such that we could utilize the previous algorithms as black boxes. Nevertheless, we take over the notion of domination from the laminar case and
slightly extend it using the previously introduced definition of $\delta$-intervals $I_\delta(j)\coloneqq[r_j+\delta\ell_j,d_j-\delta\ell_j)$. We say job $j$ \textit{$\delta$-dominates} job
$j'$ if job $j$ dominates $j'$ and we have $I(j')\cap I_\delta(j)
\neq\emptyset$. The presented algorithm works for any fixed
$\delta\in(0,1/2)$.

\subsection{Description of the Algorithm}
Again, by Theorems~\ref{thm: black box} and~\ref{thm: EDF-small}, it
suffices to consider semi-online scheduling of $\alpha$-tight jobs where $\alpha$ is fixed. We describe an $\mathcal{O}(m^2\log
m)$-competitive algorithm. 
In its structure, our 
algorithm mainly resembles the previous algorithm for laminar
instances and it incorporates 
ideas from the one for agreeable instances. 

\smallskip
\noindent\textbf{Job Assignment.} Instead of assigning each job
immediately to a fixed machines, we form $g$ initially empty \textit{groups} of jobs. We will later choose $g=\OO(m^2\log m)$
and show that each of these groups can be scheduled on $\OO(m)$
machines. Upon its release, each job $j$ is immediately assigned to
one of the groups based on the assignment of the 
previously released $\delta$-dominating jobs.
If there exists a group in which no job $\delta$-dominates $j$, then
$j$ is assigned to such a group. Otherwise, in contrast to the laminar
case, there might be multiple $\prec$-minimal such jobs in each
group. Among these jobs we now pick the one with the {\em minimum laxity}
from each group, and we denote this set of $g$ jobs as $D$. 

Like in the laminar case, we choose $\tau$ \textit{candidates},
where $\tau$ is independent of $j$ and will be specified later. Again,
these jobs will form a chain $c_1(j) \prec \dots \prec c_\tau(j)$,
where we call $c_i(j)$ the $i$-th candidate of $j$, for all
$i$. Recall that this step is straightforward for a laminar
instance as the jobs in $D$  form such a chain already. Here,
however, this chain is constructed in an iterative way, starting with
iteration $0$: In iteration $i$, we define $c_{\tau-i}(j)$ to be a
maximum-laxity job among those jobs in $D$ dominated by all jobs
selected so far. If no such jobs exist any more before we have found
the $\tau$-th candidate, the job assignment fails. 

Analogous to the laminar case, we pick a candidate $c_i(j)$ and assign
$j$ to the same group. As before, $j$ will be called an $i$-th
\textit{user} of $c_i(j)$ after this assignment, and the set of $h$-th
users of some job $j'$ is denoted by $U_h(j')$, for all $h$. We open
$\tau$ equally-sized bins for each job, where we again pack
intervals of its $h$-th users into the $h$-th bin, for all $h$. In
contrast to the laminar case, the sum of the bin sizes of a job is
only a small fraction of its laxity: Here, we choose the smallest
candidate $c_i(j)$ such that \\[-2ex]
$$|I(U_i(c_i(j)))\cup I(j)|\le\frac{\ell_{c_i(j)}}{qm\tau},$$ 
where $q$ is a constant yet to be determined. The job assignment fails
if we do not find such a candidate. Note that, again in contrast to
the laminar case, users of the same job may overlap, that is,
$|I(U_i(c_i(j)))\cup I(j)|$ is only a lower bound on the sum of the
individual interval lengths of jobs in $U_i(c_i(j))\cup\{j\}$. 

\smallskip
\noindent\textbf{Scheduling.} Consider a group and let~$S\subseteq J$
be the jobs assigned to it. We schedule $S$ on $\OO(m)$ machines in
the following way. Similarly to our algorithm for agreeable instances, each
job $j\in S$ is scheduled only within $I_\delta(j)$. More precisely, at any time~$t$ consider all the unfinished jobs $j\in S$ with $t\in I_{\delta}(j)$. Among all these jobs, we schedule exactly the $\prec$-minimal jobs, each one on a separate machine. Note that our algorithm for laminar instances also schedules exactly the $\prec$-minimal jobs on each machine, which coincides with \EDF for laminar instances.

\subsection{Analysis of the Algorithm}

As in the laminar case, we first show that, if the job assignment
never fails, we obtain a feasible schedule for all jobs. We need the following auxiliary lemma.

\begin{lemma}\label{lem: delta-half-agreeable}
	Let $j$ and $j'$ be two jobs that do not dominate one another but both $\delta$-dominate some job $j^\star$. Then $j$ and $j'$ are $(\delta/2)$-agreeable.
\end{lemma}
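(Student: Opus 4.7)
The plan is to verify the two clauses of the definition of $(\delta/2)$-agreeableness separately: first that $j$ and $j'$ are agreeable, and then that $I_{\delta/2}(j)\cap I_{\delta/2}(j')\neq\emptyset$.

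Agreeability is almost forced by the hypotheses. Since both $j$ and $j'$ dominate $j^\star$, we have $I(j^\star)\subseteq I(j)\cap I(j')$, so their intervals overlap. If $r_j=r_{j'}$, the paper's indexing convention (smaller index $\Rightarrow$ larger deadline at equal release dates) would force one of $j,j'$ to dominate the other, contradicting the hypothesis; so by symmetry assume $r_j<r_{j'}$. Then $d_j\le d_{j'}$ must hold as well, for otherwise $I(j)\supsetneq I(j')$ would make $j$ dominate $j'$. This is precisely the definition of agreeability.

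For the $(\delta/2)$-interval overlap I would exploit the $\delta$-domination data via witness points. Picking any $t\in I(j^\star)\cap I_\delta(j)$, the defining bound $t<d_j-\delta\ell_j$ together with the containment $I(j^\star)\subseteq I(j')$ (so $t\ge r_{j'}$) yields $d_j-r_{j'}>\delta\ell_j$. A mirror-image argument starting from a witness $t'\in I(j^\star)\cap I_\delta(j')$ and using $I(j^\star)\subseteq I(j)$ yields $d_j-r_{j'}>\delta\ell_{j'}$. Averaging produces the crucial inequality
\[ d_j-r_{j'}>(\delta/2)(\ell_j+\ell_{j'}), \]
which rearranges to $r_{j'}+(\delta/2)\ell_{j'}<d_j-(\delta/2)\ell_j$; this is exactly one of the two endpoint conditions for $I_{\delta/2}(j)$ and $I_{\delta/2}(j')$ to meet. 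The mirror condition $r_j+(\delta/2)\ell_j<d_{j'}-(\delta/2)\ell_{j'}$ then follows at once from the agreeability-established orderings $r_j\le r_{j'}$ and $d_j\le d_{j'}$, which imply $d_{j'}-r_j\ge d_j-r_{j'}>(\delta/2)(\ell_j+\ell_{j'})$. Combining the two endpoint conditions gives $I_{\delta/2}(j)\cap I_{\delta/2}(j')\neq\emptyset$, finishing the argument.

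I do not expect a real obstacle here. The one step that needs a little care is translating the geometric statement ``$I(j^\star)$ reaches into $I_\delta(j)$'' into the algebraic form ``$r_{j'}$ is far to the left of $d_j$'', for which the witness-point device above is the natural tool; everything else is symmetric bookkeeping.
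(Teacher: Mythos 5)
Your proof is correct, and it takes a genuinely different route than the paper's. The paper's argument fixes a single explicit common witness point, namely the midpoint $t^\star=(r_{j^\star}+d_{j^\star})/2$ of $I(j^\star)$, and shows by a short geometric case analysis (bounding how far $t^\star$ can stick out of $I_\delta(j)$) that $t^\star$ lies in $I_{\delta/2}(j)$, and symmetrically in $I_{\delta/2}(j')$. You instead take two a priori unrelated witness points, one in $I(j^\star)\cap I_\delta(j)$ and one in $I(j^\star)\cap I_\delta(j')$, convert them into the endpoint inequalities $d_j-r_{j'}>\delta\ell_j$ and $d_j-r_{j'}>\delta\ell_{j'}$, and close the argument by averaging and symmetry; no single common point is produced. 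Both are elementary and of similar length, but your version is cleaner in bypassing the paper's case split (``if $t^\star\in I_\delta(j)$ \dots\ otherwise \dots''), and it has the additional merit of explicitly establishing the agreeability clause of the definition of $(\delta/2)$-agreeable, which the paper's proof leaves implicit. The one small nitpick: your ``averaging'' step really uses that $d_j-r_{j'}$ exceeds the maximum of $\delta\ell_j$ and $\delta\ell_{j'}$, hence their average, but the conclusion is of course correct.
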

\begin{proof}
	We show that 
	\begin{equation}\label{eq: middle-point}
t^\star\coloneqq\frac{r_{j^\star}+d_{j^\star}}{2}\in I_{\frac{\delta}{2}}(j).		
	\end{equation}
If $t^\star\in I_{\delta}(j)$ holds, Equation~\eqref{eq: middle-point} directly follows. Otherwise, let $I$ be the largest interval that contains $t^\star$ and is a subset of $I(j^\star)\setminus I_{\delta}(j)$. Note that either $r_{j^\star}$ borders $I$ to the left or $d_{j^\star}$ borders $I$ to the right. W.l.o.g.~assume that $I=[t,d_{j^\star})$ for some $t>r_{j^\star}$. Now we have $t^\star-t< |I|/2\leq \delta\ell_j/2$. As $I_{\delta/2}(j)$ is obtained from $I_{\delta}(j)$ by enlarging it by $\delta\ell_j/2$ in both directions, Equation~\eqref{eq: middle-point} follows.
		
	By the same argument, we can prove $t^\star\in I_{\delta/2}(j')$, and the claim follows.
\qed\end{proof}

Now we can prove the following statement, where we will pick a suitable $q$.

\begin{lemma}\label{lem: general-feas-sched}
If the job assignment does not fail, our algorithm produces a feasible schedule on $\OO(mg)$ machines.
\end{lemma}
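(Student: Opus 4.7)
The plan is to prove two things: (a) the schedule uses $\OO(mg)$ machines in total, and (b) every job completes within its window.

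For (a), I would fix a group $G$ and a time $t$ and consider the jobs that the algorithm schedules simultaneously on machines of $G$ at $t$, namely the $\prec$-minimal unfinished jobs with $t\in I_\delta(\cdot)$. For any two such jobs $j,j'$, neither dominates the other (by $\prec$-minimality), their full intervals share $t$, and their $\delta$-intervals share $t$; combined with the index convention this forces $r_j<r_{j'}\Leftrightarrow d_j<d_{j'}$, so $j,j'$ are agreeable, and the $\delta$-interval intersection then makes them $\delta$-agreeable. Applying Lemma~\ref{lem: strong-agreeable} with $\beta=\delta$ to the representative of maximum laxity bounds the size of this set by $\OO(m)$. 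Summing over the $g$ groups yields the $\OO(mg)$ bound.

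For (b), I would argue that each job $j$ in group $G$ is blocked (not $\prec$-minimal among the unfinished active in $G$) for at most $(1-2\delta)\ell_j$ time inside $I_\delta(j)$; since $|I_\delta(j)|=p_j+(1-2\delta)\ell_j$, this yields completion. Only a $j'\prec j$ in $G$ with $j$ $\delta$-dominating $j'$ can contribute to blocking inside $I_\delta(j)$, since otherwise $I_\delta(j')\subseteq I(j')$ is disjoint from $I_\delta(j)$; denote this set $\mathrm{Block}^*(j)$. By the algorithm, each $j'\in\mathrm{Block}^*(j)$ is assigned to $G$ as an $i$-th user of some candidate $c\preceq j$ in $G$, because $j$ itself $\delta$-dominates $j'$ inside $G$, forcing the selected $\prec$-minimal minimum-laxity $\delta$-dominator to be $\preceq j$. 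The bin-capacity constraint gives $|I(U_i(c))|\le\ell_c/(qm\tau)$ and hence $|I(\bigcup_i U_i(c))|\le\ell_c/(qm)$, so the total blocking time is at most
\[
\frac{1}{qm}\sum_{c\preceq j,\, c\in G}\ell_c.
\]
The hardest step is to bound this sum by $\OO(qm(1-2\delta)\ell_j)$. I would exploit the recursive structure of the candidates: each $c\prec j$ contributing to the sum is itself in $\mathrm{Block}^*(j)$ (any user of $c$ in $\mathrm{Block}^*(j)$ witnesses $I(c)\cap I_\delta(j)\neq\emptyset$), hence $c$ is again a user of some candidate $c'\preceq j$ in $G$, with $\ell_c\le|I(c)|\le\ell_{c'}/(qm\tau)$. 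Organizing these candidates into layers by nesting depth, I would invoke Lemma~\ref{lem: delta-half-agreeable} to conclude that non-dominating candidates sharing a commonly $\delta$-dominated job are $(\delta/2)$-agreeable, and then apply Lemma~\ref{lem: strong-agreeable} to cap each layer's $\prec$-antichain width by $\OO(m)$. The geometric shrinkage $1/(qm\tau)$ induced on $\ell_c$ by the bin constraint, together with the $\alpha$-tight bound $\ell_c\le\ell_j/(1-\alpha)$, then makes the per-layer contributions form a convergent geometric series summing to $\OO(m\ell_j)$. Choosing $q$ a sufficiently large constant (depending on $\alpha$ and $\delta$) makes the blocking bound at most $(1-2\delta)\ell_j$.

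The main obstacle is this last step: cleanly organizing the candidate set in $G\cap\mathrm{Block}^*(j)$ into layers so that Lemma~\ref{lem: delta-half-agreeable} applies uniformly at each layer, and then matching the per-layer antichain bound against the bin-capacity geometric decay to obtain the $\OO(m\ell_j)$ total. The other ingredients, $\alpha$-tightness, Theorem~\ref{thm: strong-density}, and the one-candidate-per-group rule from the assignment procedure, play supporting roles in this bookkeeping.
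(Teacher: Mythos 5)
Your part (a) is correct and matches the paper: two jobs scheduled simultaneously in a group are $\prec$-minimal in $S_t$, hence mutually non-dominating with intersecting $\delta$-intervals, so $\delta$-agreeable, and Lemma~\ref{lem: strong-agreeable} (applied from the max-laxity representative) bounds their number by $\OO(m)$, giving $\OO(mg)$ machines overall.

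Part (b), however, has a genuine gap. You claim that every blocker $j'$ of $j$ is assigned as a user of a candidate $c\preceq j$ in $G$, ``because $j$ itself $\delta$-dominates $j'$ inside $G$, forcing the selected $\prec$-minimal minimum-laxity $\delta$-dominator to be $\preceq j$.'' This inference is false. When $j'$ is assigned, the algorithm identifies the set of $\prec$-minimal $\delta$-dominators of $j'$ in $G$ and picks the one with minimum laxity. Job $j$ is one of those $\prec$-minimal $\delta$-dominators, so the selected candidate $c$ satisfies $\ell_c\le\ell_j$---but that is all. Two distinct $\prec$-minimal $\delta$-dominators of the same job are, by $\prec$-minimality, mutually non-dominating, so $c\preceq j$ would force $c=j$, which need not hold. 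The consequence is that your key sum $\frac{1}{qm}\sum_{c\preceq j,\,c\in G}\ell_c$ does not cover the actual preemption of $j$: the relevant $c$'s are not nested under $j$ at all.

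This also undermines the proposed ``recursive structure / geometric decay'' argument, which hinges on the same false nesting claim at each layer. The correct resolution is simpler and avoids any recursion: since $c\ne j$ and $c,j$ are both $\prec$-minimal $\delta$-dominators of the same job $j'$ in $G$, they are non-dominating $\delta$-dominators of a common job, hence $(\delta/2)$-agreeable by Lemma~\ref{lem: delta-half-agreeable}. Combined with $\ell_c\le\ell_j$, Lemma~\ref{lem: strong-agreeable} bounds the number of such $c$'s by $\OO(m)$. Each $c$ (including $j$ itself) contributes at most $\sum_i|I(U_i(c))|\le\tau\cdot\ell_c/(qm\tau)=\ell_c/(qm)\le\ell_j/(qm)$ of preemption to $j$, so the total preemption inside $I_\delta(j)$ is $\OO(m)\cdot\ell_j/(qm)=\OO(1/q)\cdot\ell_j\le(1-2\delta)\ell_j$ once $q$ is chosen large enough. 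So the ingredients you list are the right ones, but the place where Lemma~\ref{lem: delta-half-agreeable} is needed is exactly to replace your incorrect $c\preceq j$ claim, not to organize a geometric hierarchy.
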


\begin{proof}
Consider an arbitrary job $j$, assigned to group $h$. We first prove that $j$ receives a processing time of $p_j$, that is, it is preempted for at most $(1-2\delta)\ell_j$ within $I_\delta(j)$. Consider an arbitrary time $t\in I_\delta(j)$ when $j$ is preempted, and let $S_t$ be those jobs $j'$ from group $h$ with $t\in I_\delta(j')$. According to the scheduling rule, $j$ is not a $\prec$-minimal job in $S_t$, so there exist jobs in $S_t$ which are dominated, and thus $\delta$-dominated, by $j$. Among them, let $j^\star$ be a $\prec$-maximal one, and consider the time when job $j^\star$ is assigned to group $h$. Then job $j$ is one of the $\prec$-minimal jobs that $\delta$-dominate $j^\star$ in group $h$. According to the algorithm, among all the $\prec$-minimal jobs that $\delta$-dominate $j^\star$ in group $h$, the one of the least laxity is picked and added to the chain of $\tau$ candidates of $j^\star$, and $j^\star$ is eventually assigned (as a user) to this candidate. Thus,
job $j^\star$ is either a user of job $j$, or a user of some other job $j'$ with $\ell_{j'}\le \ell_j$ which is also a $\prec$-minimal job that $\delta$-dominates $j^\star$.

With the above argument, we conclude that job $j$ is preempted  
either due to one of its own users or a user of some other job $j'$ from group $h$ such that $\ell_{j'}\le \ell_j$. By our job assignment, this preempts $j$ by at most $\ell_j/(qm)$ or $\ell_{j'}/(qm)\le \ell_j/(qm)$, respectively. We claim that there are at most $\OO(m)$ such $j'$. To see why, notice that according to Lemma~\ref{lem:
   delta-half-agreeable} jobs $j$ and $j'$ are
 $\delta/2$-agreeable. As $\ell_{j'}\le \ell_j$, according to Lemma~\ref{lem: strong-agreeable} the claim follows. 
Thus in total job $j$ is preempted by at most
 $\OO(m)\cdot\ell_{j}/(qm) \le (1-2\delta)\ell_j$, with $q$ being a sufficiently
 large constant.  
		
 It remains to show that, in the same group $h$, we process at most $\OO(m)$ jobs at any time $t$. This is straightforward: Let $S_t$ be as above. Note that we always process the $\prec$-minimal jobs in $S_t$, and any two of them are $\delta$-agreeable. Hence, according to Lemma~\ref{lem: strong-agreeable}, there are at most $\OO(m)$ such jobs. \qed
\end{proof}

We now show that the job assignment never fails. In contrast to the laminar case, we also have to make sure that it never fails at finding a chain of candidates.

\begin{lemma}\label{lem: chain-selection}
We can choose $\tau=\Omega(g/m)$, and our algorithm will never fail at finding a chain of $\tau$ candidates.
\end{lemma}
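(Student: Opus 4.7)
The plan is to argue by contradiction: assume the chain-selection fails, so after building a chain $c^{(1)}\succ c^{(2)}\succ\dots\succ c^{(t)}$ of length $t<\tau$, where $c^{(1)}$ is the maximum-laxity job of $D$ and each $c^{(k+1)}$ is the maximum-laxity job of $D$ dominated by $c^{(k)}$, no element of $D\setminus\{c^{(1)},\dots,c^{(t)}\}$ is dominated by $c^{(t)}$. I would then show $g=|D|=\OO(mt)$, which for a suitably small constant $c'$ in $\tau=c'g/m$ contradicts $t<\tau$.

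To bound $|D|$, I would partition $D$ by \emph{level}: for $j'\in D$ let $\lambda(j')$ be the largest $k\ge 1$ such that $c^{(k)}$ dominates $j'$, and $\lambda(j')=0$ if no selected candidate dominates $j'$. The failure condition forces $\lambda(j')\in\{0,1,\dots,t-1\}$ for every $j'\in D$; in particular $\lambda(c^{(k+1)})=k$. Writing $D_k\coloneqq\{j'\in D:\lambda(j')=k\}$, it therefore suffices to prove $|D_k|=\OO(m)$ for each $k$.

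Fix $k$ and consider any $j'\in D_k\setminus\{c^{(k+1)}\}$. Since $j'$ was in the pool when $c^{(k+1)}$ was selected by the maximum-laxity rule, $\ell_{j'}\le\ell_{c^{(k+1)}}$, and $c^{(k+1)}$ does not dominate $j'$ by definition of $\lambda$. Hence either (a) $j'$ dominates $c^{(k+1)}$, or (b) $j'$ and $c^{(k+1)}$ are incomparable under $\succ$. Case~(b) is handled by the agreeable tool-kit: both $j'$ and $c^{(k+1)}$ lie in $D$ and thus $\delta$-dominate $j$, so Lemma~\ref{lem: delta-half-agreeable} yields that they are $(\delta/2)$-agreeable, and then Lemma~\ref{lem: strong-agreeable}, applied with reference job $c^{(k+1)}$, parameter $\beta=\delta/2$, and the bound $\ell_{j'}\le\ell_{c^{(k+1)}}$, caps the number of such $j'$ by $\OO(m)$.

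Case~(a) is where I expect the main subtlety, since it has no direct analogue in the agreeable-instance analysis and relies instead on a density argument. Here $I(c^{(k+1)})\subseteq I(j')$ and $\ell_{j'}\le\ell_{c^{(k+1)}}$, so the contribution of $j'$ to the interval $I\coloneqq I(c^{(k+1)})$ is at least $|I|-\ell_{j'}\ge |I|-\ell_{c^{(k+1)}}=p_{c^{(k+1)}}\ge\alpha|I|$, where the last step uses $\alpha$-tightness of $c^{(k+1)}$. Summing over all case-(a) jobs, the total contribution to $I$ is at least $\alpha|I|$ times their number, and by Theorem~\ref{thm: strong-density} this cannot exceed $m|I|$, so there are at most $m/\alpha=\OO(m)$ case-(a) jobs. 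Combining the two subcases yields $|D_k|=\OO(m)$, hence $|D|=\OO(mt)$, which gives the required $t=\Omega(g/m)$ and completes the contradiction.
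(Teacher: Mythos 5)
Your proposal is correct and follows essentially the same approach as the paper's proof. The paper phrases the counting as ``each time we pick a new candidate, we lose at most $\OO(m)$ jobs from the pool,'' while you phrase it as an explicit level partition $D=D_0\cup\cdots\cup D_{t-1}$ with $|D_k|=\OO(m)$; these are the same bookkeeping, since $D_k$ is precisely the set of jobs discarded when passing from the pool dominated by $c^{(k)}$ to the pool dominated by $c^{(k+1)}$. The core dichotomy for bounding each $D_k$ --- case (b) via Lemma~\ref{lem: delta-half-agreeable} plus Lemma~\ref{lem: strong-agreeable}, and case (a) via $\alpha$-tightness plus Theorem~\ref{thm: strong-density} --- is exactly the paper's argument with the same reference job (the max-laxity job of the current pool) and the same laxity bound $\ell_{j'}\le\ell_{c^{(k+1)}}$.
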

\begin{proof}
Let $S$ be some set of jobs that all $\delta$-dominate the same job $j^\star$. Let job $j$ be the one with maximum laxity in $S$. Note that is suffices to show that, except for $\OO(m)$ jobs, all the jobs in $S$ are dominated by $j$. Then, every time when we pick a new candidate, we lose $\OO(m)$ jobs, and with some $\tau=\Omega(g/m)$ the assignment procedure does not fail when selecting the chain of candidates.

Consider any job $j'\in S$ which is not dominated by $j$. If job $j'$ does not dominate $j$ either, then due to the fact that jobs $j$ and $j'$ both $\delta$-dominate $j^\star$, $j'$ is $(\delta/2)$-agreeable with $j$ according to Lemma~\ref{lem: delta-half-agreeable}. By applying Lemma~\ref{lem: strong-agreeable}, we get that there are at most $\OO(m)$ such jobs $j'$. Otherwise $j'$ dominates $j$. As we have $\ell_{j'}\leq\ell_j$ and $I(j')\supseteq I(j)$ for each such $j'$, it follows that $j'$ contributes at least $p_j$ to $I(j)$. Using that every job is $\alpha$-tight, Theorem~\ref{thm: strong-density} now allows us to bound the number of these jobs by $m/\alpha=\OO(m)$, too.
\qed\end{proof}

The remainder of the proof again relies on selecting a failure set
whenever a job assignment fails and deriving a contradiction from
that. The arguments are the same as in Section~\ref{sec:
  laminar}. Thus, we omit them here and refer to the appendix.


\bibliographystyle{splncs03}
\bibliography{machinenum}

\newpage
\appendix

\section{Missing Material from Section~\ref{sec: pre}}

\subsection{Proof of Theorem~\ref{thm: strong-density}}

\begin{proof}[Theorem~\ref{thm: strong-density}]
It is easy to see that $C(J,I)\le m|I|$ since the processing volume of $C(J,I)$ must be finished within $I$ of length $|I|$ in any feasible solution, while the optimal solution can finish at most $m|I|$ by using $m$ machines.

It remains to show that there exists some $I^\star$ such that $C(J,I^\star)/|I^\star|>m-1$. 
Notice that the offline scheduling problem can be formulated as a flow problem, and there exists an optimal schedule which makes decisions only at integral points in time. Consider any such schedule that uses $m$ machines.
We denote by $\chi_h$ the number of time slots $[t,t+1)$ during which exactly $h$ machines are occupied. Thus, for any optimal schedule on $m$ machines that makes decisions only at integral points in time, we obtain a vector $\chi=(\chi_m,\cdots, \chi_0)$. We define a lexicographical order $<$ on these vectors: we have $\chi<\chi'$ if and only if there exists some $h$ with $\chi_h<\chi_h'$ and $\chi_i=\chi_i'$, for all $i<h$. We pick the optimal schedule whose corresponding vector is the smallest with respect to $<$.

We now construct a directed graph $G=(V,A)$ based on the schedule we have picked. Let $V=\{v_0,\dots,v_{d_{\max}-1}\}$ where $v_t$ represents the slot $[t,t+1)$.
Let $\phi_t$ be the number of machines occupied during $[t,t+1)$. An arc from $v_i$ to $v_k$ exists iff $\phi_i\ge \phi_k$ and there exists some job $j$ with $[i,i+1)\cup [k,k+1)\subseteq I(j)$ whereas $j$ is processed in $[i,i+1)$ but not in $[k,k+1)$.
Intuitively, an arc from $v_i$ to $v_k$ implies that one unit of the workload in $[i,i+1)$ could be carried to $[k,k+1)$.

We claim that in $G$ there is no (directed) path which starts from some $v_i$ with $\phi_i=m$, and ends at some $v_\ell$ with $\phi_\ell\le m-2$. Suppose there exists such a path, say $(v_{i_1},\dots, v_{i_\ell})$ with $\phi_{i_1}=m$ and $\phi_{i_\ell}\le m-2$. Then we alter the schedule such that we move one unit of the workload from $[i_s,i_s+1)$ to $[i_{s+1},i_{s+1}+1)$, for all $s<\ell$. By doing so, $\phi_{i_1}$ decreased and $\phi_{i_\ell}$ increased by $1$ each. By $\phi_{i_1}=m$ and $\phi_{i_\ell}\le m-2$, we get that $\chi_m$ decreases by $1$, contradicting the choice of the schedule.

Consider $V_m=\{v_i\mid\phi_i=m\}$ and let $P(V_m)=\{v_{i_1},\dots,v_{i_\ell}\}$ be the set of vertices reachable from $V_m$ via a directed path (trivially, $V_m\subseteq P(V_m)$). The above arguments imply that for any $v_i\in P(V_m)$, it holds that $\phi_i\ge m-1$. We claim that $\sum_j C(j,I^\star)=\sum_h \phi_{i_h}$ for $I^\star=\{[i_h,i_h+1)\mid 1\le h\le \ell\}$. Note there is no arc going out from $P(V_m)$ (otherwise the endpoint of the arc is also reachable and would have been included in $P(V_m)$). That is, we cannot move out any workload from $I^\star$, meaning that the contribution of all jobs to $I^\star$ is included in the current workload in $I^\star$, i.e., $\sum_h \phi_{i_h}$. Thus, $$\frac{C(J,I^\star)}{|I^\star|}\ge \frac{\sum_h\phi_{i_h}}{|I^\star|}=\frac{\sum_h\phi_{i_h}}{|P(V_m)|}.$$
Notice that $\phi_{i_h}$ is either $m$ or $m-1$, and among them there is at least one of value $m$, thus the right-hand side is strictly larger than $m-1$, i.e., $C(J,I^\star)/|I^\star|>m-1$.  
\qed\end{proof}

We remark that this result can be proven also using LP duality.

\subsection{Proof of Theorem~\ref{thm: black box}}

\begin{proof}[Theorem~\ref{thm: black box}]
  Let~$t_0,t_1,\ldots,t_k$ denote the times at which \double opens new machines.
  \double schedules the jobs $J_{i} = \{j\,|\, r_j\in
  [t_{i},t_{i+1})\}$, with $i=0,1,\ldots,k$, using
  Algorithm~$A_{\rho}(2m(t_i))$ on $2\rho m(t_{i})$
  machines exclusively. This yields a feasible schedule since an
  optimal solution for $J_{i}$ requires at
  most~$m(t_{i+1}-1)\leq 2m(t_i)$ machines and the
  $\rho$-competitive subroutine~$A_{\rho}(2m(t_i))$ is
  guaranteed to find a feasible solution given a factor $\rho$ times
  more machines than optimal, i.e.,  $2\rho m(t_i)$.

  It remains to compare the number of machines opened by \double with
  the optimal number of machines~$m$, which is at least~$m(t_k)$. By
  construction it holds that $2m(t_i) \leq m(t_{i+1})$,
  which implies by recursive application that
  \begin{equation}\label{apx-eqn: a}
    2m(t_i) \leq \frac{m(t_k)}{2^{k-i-1}}.
  \end{equation}
  The total number of machines opened by \double is
  \begin{align*}
    \sum_{i=0}^k 2\rho m(t_i) \ \leq \  \sum_{i=0}^k \rho
    \frac{1}{2^{k-i-1}} m(t_k) \
    = \  2\rho \sum_{i=0}^k \
    \frac{1}{2^{i}} m(t_k)
    \ \leq \ 4 \rho m,
   \end{align*}
   which concludes the proof.
   \qed\end{proof}


\subsection{Proof of Theorem~\ref{thm: EDF-small}}
Recall that \EDF is the algorithm that schedules $m'=\rho m$ unfinished jobs with the smallest deadline at any time $t$. Specifically, if there are multiple jobs with the same deadline, we assume that \EDF will break the tie arbitrarily. Notice that the release dates, deadlines, and processing times of all jobs are integers. Thus, if \EDF decides to process a job at time $t$, it actually processes the job during $[t,t+1)$, i.e., \EDF does not change its decision at fractional time points. Therefore, throughout this subsection, we only consider algorithms that make decisions at integral time points and denote by~$\mathsf{A}$ any such algorithm.  

We give some notations. We denote by $w_{\mathsf{A}}(t)$ the total processing volume of jobs that~$\mathsf{A}$ assigns to $[t,t+1)$ (or simply to $t$). Specifically, if algorithm $\mathsf{A}$ makes decisions only at integral time points, this is simply the number of assigned jobs. Further, we let $p_j^{\mathsf{A}}(t)$ denote the remaining processing time of job $j$ at time $t$, $\ell_j^{\mathsf{A}}(t)=d_j-t-p_j^{\mathsf{A}}(t)$. If it is clear from context, we may drop the superscript $\mathsf{A}$ and simply write $p_j(t)$ and $\ell_j(t)$. We let $W_{\mathsf{A}}(t)$ denote the total remaining processing time of all unfinished jobs (including those not yet released) at time $t$, i.e., $W_{\mathsf{A}}(t)=\sum_{j\in J} p_j-\sum_{s=0}^{t-1} w_{\mathsf{A}}(s)$. 
We use $\OPT$ to indicate an optimal~schedule.

A job is called active if it is released but not finished. An algorithm for the (semi-)online machine minimization problem is called {\em busy} if at all times $t$, $w_{\mathsf{A}}(t)<m'$ implies that at time $t$ there are exactly $w_{\mathsf{A}}(t)$ active jobs, where $m'$ is the number of machines used by the algorithm.

We prove Theorem~\ref{thm: EDF-small} by establishing a contradiction based on the workload that \EDF when missing deadlines must assign to some interval.
We do so by using the following work load inequality, which holds for arbitrary busy algorithms. Here, $d_{\max}$ denotes the maximum deadline that occurs in the considered instance.

\begin{lemma}\label{apx-lem: small-busy-load}
Let every job be $\alpha$-loose, $\rho\ge 1/(1-\alpha)^2$ and let $\mathsf{A}$ be a busy algorithm for semi-online machine minimization using $\rho m$ machines. Assume that $\ell_j^\mathsf{A}(t^\prime)\geq 0$ holds, for all $t^\prime\leq t$ and $j$. Then $$W_\mathsf{A}(t)\le W_{\OPT}(t)+\frac{\alpha}{1-\alpha}\cdot m\cdot\left(d_{\max}-t\right).$$
\end{lemma}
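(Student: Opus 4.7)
The plan is to bound the workload gap through a telescoping identity combined with a case analysis on each time slot. Writing
\[
W_\mathsf{A}(t)-W_{\OPT}(t) \;=\; \sum_{s=0}^{t-1}\bigl(w_{\OPT}(s)-w_\mathsf{A}(s)\bigr),
\]
I would split the slots $s<t$ into \emph{full} slots where $w_\mathsf{A}(s)=\rho m$, and \emph{slack} slots where $w_\mathsf{A}(s)<\rho m$. Since $\rho\ge 1/(1-\alpha)^2\ge 1$, full slots satisfy $w_\mathsf{A}(s)\ge m\ge w_{\OPT}(s)$, so they contribute non-positively. The entire positive mass of the sum therefore lives on slack slots.

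At a slack slot, the busy hypothesis forces $w_\mathsf{A}(s)=|A_s|$, i.e.\ $\mathsf{A}$ already processes every active job. The slot then contributes positively only when $|A_s|<w_{\OPT}(s)\le m$, in which case $\mathsf{A}$ has strictly fewer active jobs than $\OPT$. Since every job released by time $s$ is either active or completed by both, $|\bar{A}_s|-|\bar{O}_s|=w_{\OPT}(s)-|A_s|$, so each positive unit $w_{\OPT}(s)-w_\mathsf{A}(s)$ can be matched with a job that $\mathsf{A}$ has \emph{already completed} whereas $\OPT$ has not: the ``catch-up'' that $\OPT$ now performs on jobs that $\mathsf{A}$ finished earlier. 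This matching is the basic accounting I would set up.

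The key step is then to convert this catch-up accounting into the claimed bound via the $\alpha$-loose property. For each job $j$ involved in the charging, I would use $p_j\le \frac{\alpha}{1-\alpha}\ell_j$ to convert ``work $\OPT$ still has to do on $j$'' into a multiple of $j$'s laxity. Aggregating over time and jobs, the total positive contribution is bounded by $\frac{\alpha}{1-\alpha}$ times a density-like quantity on $[t,d_{\max})$; invoking Theorem~\ref{thm: strong-density} on this interval gives $C(J,[t,d_{\max}))\le m(d_{\max}-t)$ and bounds the density quantity by $m(d_{\max}-t)$, closing the estimate.

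\emph{Main obstacle.} The delicate point is the amortization in the last paragraph: turning the slot-by-slot matching into a global bound. A naive ``drop the negatives and sum'' estimate loses too much (as in my counter-check where $\sum_{A_t}\ell_j$ easily exceeds $m(d_{\max}-t)$), so the matching has to exploit that each $\OPT$ catch-up unit corresponds to a processing unit on a job that already consumed part of its laxity under $\mathsf{A}$, after which that laxity slice can be charged only once to the density budget of $[t,d_{\max})$. Making this charging scheme precise — in particular, ensuring that the factor $\frac{\alpha}{1-\alpha}$ (and not something worse like $\frac{1}{1-\alpha}$) emerges — is the technical heart of the argument.
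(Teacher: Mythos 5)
You've correctly identified that the difficulty concentrates where $\mathsf{A}$ is not saturated, but the proposal as written stops exactly at the point where the work has to happen: you write that ``making this charging scheme precise \ldots is the technical heart of the argument,'' and that heart is never supplied. So this is a sketch of an approach, not a proof. Two concrete problems beyond the admitted gap: (a) the bookkeeping identity is off --- if $\bar A_s,\bar O_s$ denote jobs completed by $\mathsf{A}$ resp.\ $\OPT$ and $A_s,O_s$ the active sets, then conservation of released jobs gives $|\bar A_s|-|\bar O_s|=|O_s|-|A_s|$, and $w_{\OPT}(s)\le |O_s|$ is only an inequality, so $|\bar A_s|-|\bar O_s|=w_{\OPT}(s)-|A_s|$ does not hold in general; (b) your argument only invokes $\rho\ge 1$ (to say full slots are non-positive), whereas the lemma genuinely needs the full strength of $\rho\ge 1/(1-\alpha)^2$. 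Any correct proof must use that hypothesis somewhere in the ``slack'' analysis; an argument that never needs it cannot yield the stated constant.

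The paper proceeds quite differently: it does induction on $t$ and splits into three cases by the size of $w_\mathsf{A}(t)$. If $w_\mathsf{A}(t)\le\frac{\alpha}{1-\alpha}m$, busyness bounds the number of active jobs, and since all past laxities are nonnegative each active job has remaining processing at most $d_{\max}-t$, which directly gives the bound. If $w_\mathsf{A}(t)\ge\frac{1}{1-\alpha}m$, the algorithm outpaces $\OPT$ (which does at most $m$ work) by at least $\frac{\alpha}{1-\alpha}m$ in one step, so the inductive bound at $t-1$ self-corrects. In the intermediate regime, one distinguishes whether all active jobs satisfy $p_j(t)\le\alpha(d_j-t)$ (then the total active volume is again small) or some job $j$ has fallen behind --- in which case $\alpha$-looseness forces $\mathsf{A}$ to have been fully saturated on all $\rho m$ machines for a $(1-\alpha)$-fraction of $[r_j,t)$, and the induction hypothesis at $r_j$ plus $\rho\ge 1/(1-\alpha)^2$ closes the bound. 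Crucially, it is this last case that uses the full $\rho$ hypothesis; your proposal has no analogue of it. If you want to salvage a global charging argument rather than the paper's induction, the object to charge against is precisely the ``caught-up'' structure at the first full slot before $t$ from the perspective of a lagging job, and that is essentially Case~3b in disguise --- so you would be rediscovering the induction, not avoiding it.
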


\begin{proof}
We prove by induction. The base is clear since $W_A(0)=W_{\OPT}(0)$. Now assume the lemma holds for all $t^\prime<t$. There are three possibilities.

\begin{enumerate}[\indent{Case }1:]
	\item We have $w_\mathsf{A}(t)\le \alpha/(1-\alpha)\cdot m$. By the fact that $\mathsf{A}$ is busy, there is an empty machine at time $t$, meaning that there are at most $\alpha/(1-\alpha)\cdot m$ active jobs. Given that $\ell_j^\mathsf{A}(t^\prime)\geq 0$ for all $t^\prime\leq t$, each of the active jobs has a remaining processing time of no more than $d_{\max}-t$, implying that $W_\mathsf{A}(t)$ is bounded by $\alpha/(1-\alpha)\cdot m\cdot(d_{\max}-t)$ plus the total processing time of the jobs that are not released. As the latter volume cannot exceed $W_{\OPT}(t)$, the claim follows.
	\item We have $w_\mathsf{A}(t)\ge  1/(1-\alpha)\cdot m$. According to the induction hypothesis, it holds that $W_\mathsf{A}(t-1)\le W_{\OPT}(t-1)+\alpha/(1-\alpha)\cdot m\cdot (d_{\max}-t+1)$. Using $w_{\OPT}(t)\le m$, we get $W_{\OPT}(t)\ge W_{\OPT}(t-1)-m.$ For Algorithm $\mathsf{A}$, it holds that $W_\mathsf{A}(t)=W_\mathsf{A}(t-1)-w_\mathsf{A}(t-1)\le W_\mathsf{A}(t-1)-1/(1-\alpha)\cdot m.$ Inserting the first inequality into the second one proves the claim.
	\item We have $\alpha/(1-\alpha)\cdot m<w_\mathsf{A}(t)<1/(1-\alpha)\cdot m$. Again by the fact that $\mathsf{A}$ is busy, there is an empty machine at time $t$, i.e., there are less than $1/(1-\alpha)\cdot m$ active jobs. Again distinguish two cases.
	\item[Case 3a:] We have $p_j(t)\le \alpha (d_j-t)$ for all active jobs. Then the total remaining processing time of active jobs is bounded by $\alpha (d_{\max}-t)\cdot 1/(1-\alpha)\cdot m$. Plugging in the total processing time of unreleased jobs, which is bounded by $W_{\OPT}(t)$, we again get the claim.
	\item[Case 3b:] There exists an active job $j$ with $p_j(t)> \alpha (d_j-t)$. Using that $j$ is $\alpha$-loose, i.e., $p_j\le \alpha(d_j-r_j)$, we get that $j$ is not processed for at least $(1-\alpha)(t-r_j)$ time units in the interval $[r_j, t)$. As Algorithm $\mathsf{A}$ is busy, this means that all machines are occupied at these times, yielding
\begin{align*}
W_\mathsf{A}(t)&\le W_\mathsf{A}(r_j)-(1-\alpha)(t-r_j)\cdot \rho m\\
&\le W_\OPT(r_j)+\frac{\alpha}{1-\alpha}\cdot m\cdot(d_{\max}-r_j)-(1-\alpha)(t-r_j)\cdot \rho m\\
&\le W_\OPT(r_j)+\frac{\alpha}{1-\alpha}\cdot m\cdot(d_{\max}-t) - m\cdot(t-r_j),
\end{align*}
where the second inequality follows by the induction hypothesis for $t=r_j$, and the third one follows by $\rho \ge 1/(1-\alpha)^2$. Lastly, the feasibility of the optimal schedule implies $$W_{\OPT}(t)\ge W_{\OPT}(r_j)-m\cdot (t-r_j),$$ which in turn implies the claim by plugging it into the former inequality.\qed
\end{enumerate}
\end{proof}

We now apply this lemma in the proof of the theorem.

\begin{proof}[Theorem~\ref{thm: EDF-small}]
Let $\rho m$ be the number of machines \EDF uses and consider the schedule produced by \EDF from an arbitrary instance $J$ only consisting of $\alpha$-loose jobs. We have to prove that every job is finished before its deadline if $\rho =1/(1-\alpha)^2$. To this end, suppose that \EDF fails. Among the jobs that are missed, let $j^\star$ be one of those with the earliest release date.

First observe that we can transform $J$ into $J^\prime\subseteq J$ such that $j^\star\in J^\prime$, $\max_{j\in J^\prime} d_j = d_{j^\star}$ and \EDF still fails on $J^\prime$. For this purpose, simply define $J^\prime=\left\{j\in J\mid d_j\leq d_{j^\star}\right\}$ and notice that we have only removed jobs with a later deadline than $\max_{j\in J^\prime} d_j$. This implies that every job from $J^\prime$ receives the same processing in the \EDF schedule of $J^\prime$ as in the one of $J$.

We can hence consider $J^\prime$ instead of $J$ from now on. In the following, we establish a contradiction on the workload during the time interval $[r_{j^\star},d_{j^\star})$. The first step towards this is applying Lemma~\ref{apx-lem: small-busy-load} for an upper bound. Also making use of the feasibility of the optimal schedule, we get:
\begin{align*}
W_{\EDF}(r_{j^\star})&\le W_{\OPT}(r_{j^\star})+\frac{\alpha}{1-\alpha}\cdot m\cdot(d_{\max}-r_{j^\star})\\
&\le \frac{1}{1-\alpha}\cdot m\cdot(d_{\max}-r_{j^\star}).
\end{align*}
We can, however, also lower bound this workload. Thereto, note that job $j^\star$ is not processed for at least $(1-\alpha)(d_{j^\star}-r_{j^\star})$ time units where $d_{j^\star}=d_{\max}$, implying that all machines must be occupied by then, i.e., \[W_{\EDF}(r_{j^\star})> (1-\alpha)\cdot \rho m \cdot (d_{\max}-r_{j^\star}).\] If we compare the right-hand sides of the two inequalities, we arrive at a contradiction if $\rho\geq 1/(1-\alpha)^2$, which concludes the proof.
\qed\end{proof}

\section{Missing Material from Section~\ref{sec: general}}

\subsection{Analysis of the Job Assignment}

We show that the algorithm does not fail to assign any job to a group for some $\tau=\OO(m\log m)$. The basic idea is the same as that for laminar instances, that is, we again select out a critical job set if the algorithm fails to assign a job, and we derive a contradiction by a load argument from this set.

In particular, for a fixed job $j^\star$ whose assignment fails, we iteratively define its \textit{failure set} $F\subseteq J$, which is the union of the $\tau+1$ different sets $F_0,\dots,F_{\tau}$. To define $F_0$, we also use auxiliary sets $F^{0}_0,\dots,F^{\tau}_0$. 

We initialize $F_0^{\tau}\coloneqq \{j^\star\}$. Given $F_0^i$, we construct $F_{i}$ and $F^{i-1}_0$ in the following way. First, $F_{i}$ is defined to be the set of all $\prec$-maximal $i$-th candidates of jobs in $F_0^i$. Subsequently, $F_0^{i-1}$ is constructed by adding all the $i$-th users of jobs in $F_{i}$ to $F_0^i$. Formally,
\begin{align*}
	&{F}_{i}\coloneqq M_{\prec}\left(\{c_{i}(j) \mid j\in F_0^{i}\}\right)\\
	\text{and }&F_0^{i-1}\coloneqq F_0^{i}\cup \bigcup_{j\in {F}_{i}}U_{i}(j),
\end{align*}

where $M_{\prec}$ is the operator that picks out the $\prec$-maximal elements from a set of jobs: $M_{\prec}(S):=\{j\in S\mid \nexists j':j\prec j'\}.$ After $\tau$ such iterations, that is, when $F_0^0,F_1,\dots,F_{\tau}$ have been computed, we set $F_0\coloneqq M_{\prec}(F_0^0)$.

Again, we have the following lemma whose proof is exactly the same as that of Lemma~\ref{lem: fs-bp}. Notice that, due to the difference in the bin sizes, here $F_0$ $\gamma$-blocks each $F_i$ where $\gamma=1/(q m\tau)$, which is different from Lemma~\ref{lem: fs-bp}.

\begin{lemma}\label{lem: general-fs-bp}
Every failure set $F$ has the following properties:
\begin{enumerate}[(i)]
	\item We have $F_0\prec\dots\prec F_{\tau}$.
	\item The sets $F_0,\dots,F_{\tau}$ are pairwise disjoint.
	\item For all $i=1,\dots,\tau$, $F_0$ $\gamma$-blocks $F_{i}$ where $\gamma=1/(q m\tau)$.
\end{enumerate}
\end{lemma}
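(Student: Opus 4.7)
My plan is to mirror the proof of Lemma~\ref{lem: fs-bp} from the laminar case almost verbatim. The only structural differences in the general setting are that each job $j'$ offers $\tau$ bins of size $\ell_{j'}/(qm\tau)$ rather than $m'$ bins of size $\ell_{j'}/m'$, and that the intervals of users sitting in a single bin may overlap. Neither affects the combinatorial skeleton of the proof; only the constant $1/(qm\tau)$ in property (iii) needs to be read off of the new bin size.

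For property (i), I would introduce the auxiliary sets $C_0 \coloneqq F_0^0$ and $C_i \coloneqq \{c_i(j) \mid j \in F_0^i\}$ for $1 \leq i \leq \tau$, and prove by induction that $C_{i-1} \prec C_i$. The base case is essentially built into the construction: $j^\star$ is dominated by each of its candidates, and every other element of $F_0^0$ is an $h$-th user (for some $h \geq 1$) of a job in $F_h$, hence dominated by that job. For $i \geq 2$, any $x \in C_{i-1}$ is an $(i-1)$-th candidate of some $j' \in F_0^{i-1}$, and by the construction of $F_0^{i-1}$ either $j' \in F_0^i$ (so $c_i(j') \in C_i$ dominates $x$) or $j'$ is an $i$-th user of some $j'' \in F_i \subseteq C_i$, in which case $j'' = c_i(j')$ dominates $j'$ and hence $x$. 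Since $F_i = M_\prec(C_i)$, the chain $F_0 \prec F_1 \prec \dots \prec F_\tau$ follows. Property (ii) is then immediate: if some $j$ lay in $F_i \cap F_{i'}$ for $i<i'$, then (i) would produce a $j^\sharp \in F_{i'}$ with $j \prec j^\sharp$, contradicting $F_{i'} = M_\prec(\cdot)$.

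For property (iii), I fix an arbitrary $j' \in F_i$; by construction $j' = c_i(j)$ for some $j \in F_0^i$. Because the algorithm always assigns a job to the smallest viable candidate and $j$ was not assigned to $j'$, the $i$-th bin of $j'$ lacked capacity at that moment, so
$$|I(U_i(j')) \cup I(j)| > \frac{\ell_{j'}}{qm\tau}.$$
Every element of $U_i(j') \cup \{j\}$ lies in $F_0^0$ and is dominated by $j'$. To produce $F_0' \subseteq F_0$ of the required measure, I replace each such element either by itself (if it is $\prec$-maximal in $F_0^0$) or by one of its $\prec$-maximal dominators in $F_0^0$; since replacements only enlarge intervals, the union has measure at least $\ell_{j'}/(qm\tau)$.

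The main obstacle I expect is justifying $F_0' \prec \{j'\}$, i.e., that each $\prec$-maximal dominator chosen above is itself $\prec j'$. In the laminar case this was automatic, but in the general setting one must rule out the possibility that a dominator in $F_0^0$ contains $I(j')$ properly. I would handle this by exploiting the recursive way $F_0^0$ is populated: such a dominator would be a user from some earlier iteration whose own candidate-selection step would have forced it into a position inconsistent with $j'$ having been chosen as the $i$-th candidate of $j$. Once this subtlety is settled, combining (i)--(iii) delivers the lemma exactly as in the laminar case.
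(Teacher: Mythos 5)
Your proposal takes essentially the same route as the paper, which proves this lemma simply by declaring that the argument is word-for-word that of Lemma~\ref{lem: fs-bp}, with the bin size $\ell_{j'}/m'$ replaced by $\ell_{j'}/(qm\tau)$, and you correctly identify that neither the new bin granularity nor the possible overlap of users' intervals disturbs the combinatorial skeleton. The ``main obstacle'' you flag at the end---certifying $F_0'\prec\{j'\}$ for the replacement set in property~(iii)---is not treated by the paper either (the laminar proof of Lemma~\ref{lem: fs-bp} passes over the same point silently), so while your sketched resolution of it is speculative rather than a proof, raising it does not constitute a departure from the paper's argument.
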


The following lemma is also true via the same proof as Lemma~\ref{lem: exp-inc}.

\begin{lemma}\label{lem: general-exp-inc}
For any fixed $\alpha\in (0,1)$, there exists $k=\OO(m)$ with the following property. If $F_i$ $\gamma$-blocks $F_{h}$, then $F_{i+k}$ $2\gamma$-blocks $F_{h}$, for all $i+k<h$.
\end{lemma}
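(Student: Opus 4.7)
The plan is to repeat the argument of Lemma~\ref{lem: exp-inc} essentially verbatim, since Lemma~\ref{lem: general-fs-bp} gives the general failure set the same chain property, pairwise disjointness, and blocking structure that the earlier proof relied on. I fix an arbitrary job $j \in F_h$ and, for each $i' \in \{i, i+1, \ldots, i+k\}$, denote by $F^\star_{i'} \subseteq F_{i'}$ the jobs dominated by $j$. Because $F_i$ $\gamma$-blocks $F_h$, some subset of $F_i$ witnessing this blocking lies in $F^\star_i$, so $|I(F^\star_i)| \geq \gamma \ell_j$. Using Lemma~\ref{lem: general-fs-bp}(i) to push domination upward through the chain $F_{i'-1} \prec F_{i'}$, together with (ii) to keep layers disjoint, one obtains the interval nesting $I(F^\star_i) \subseteq I(F^\star_{i+1}) \subseteq \cdots \subseteq I(F^\star_{i+k})$.

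With the nesting in hand, I exploit $\alpha$-tightness: every job in any $F^\star_{i'}$ sits inside $I(F^\star_{i+k})$ and must devote at least an $\alpha$-fraction of its interval to processing, so it contributes at least $\alpha |I(F^\star_{i'})| \geq \alpha |I(F^\star_i)|$ of processing volume to $I(F^\star_{i+k})$. Summing over the $k+1$ pairwise disjoint layers yields a total contribution of at least $k \alpha |I(F^\star_i)|$, which Theorem~\ref{thm: strong-density} caps at $m |I(F^\star_{i+k})|$. Choosing any integer $k > 2m/\alpha$, which is $\OO(m)$ for the fixed constant $\alpha$, then gives $|I(F^\star_{i+k})| \geq 2 |I(F^\star_i)| \geq 2\gamma \ell_j$. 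Since every element of $F^\star_{i+k}$ is dominated by $j$ and lies in $F_{i+k}$, this exhibits a $2\gamma$-blocking subset of $F_{i+k}$ for $j$; as $j$ was an arbitrary element of $F_h$, the lemma follows.

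The step I would treat most carefully is the nesting inclusion $I(F^\star_{i'-1}) \subseteq I(F^\star_{i'})$, because the general setting lacks laminarity and so the job in $F_{i'}$ supplied by $F_{i'-1} \prec F_{i'}$ need not itself be dominated by $j$. The same bookkeeping used in the laminar case, namely walking up the $\prec$-chain and invoking the $\prec$-maximality built into the failure-set construction, produces a witness in $F^\star_{i'}$ whose interval covers $I(j_1)$ for any $j_1 \in F^\star_{i'-1}$; this is precisely what the author means by ``the same proof.'' The only numerical difference from Lemma~\ref{lem: exp-inc} is that the starting blocking parameter is now the general $\gamma = 1/(qm\tau)$ rather than $1/m'$, but the statement takes $\gamma$ as a parameter, so the choice of $k$ is unaffected.
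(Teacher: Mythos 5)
Your argument reproduces the paper's proof of Lemma~\ref{lem: exp-inc} with Lemma~\ref{lem: general-fs-bp} substituted for Lemma~\ref{lem: fs-bp}, which is exactly what the paper intends: the appendix simply states that Lemma~\ref{lem: general-exp-inc} ``is also true via the same proof as Lemma~\ref{lem: exp-inc}.'' The subtlety you flag about the nesting $I(F^\star_{i'-1}) \subseteq I(F^\star_{i'})$ --- that the witness in $F_{i'}$ supplied by $F_{i'-1} \prec F_{i'}$ is not automatically dominated by $j$ once laminarity (which the earlier proof implicitly uses to force $I(j_2)$ and $I(j)$ into a nesting relation) is dropped --- is a genuine point, but the paper does not address it beyond the ``same proof'' remark, so your treatment is at the same level of detail as the source.
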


\begin{lemma}\label{lem: general-dont-ever-fail}
	There exists $g=\OO(m^2\log m)$ such that the job assignment never fails.
\end{lemma}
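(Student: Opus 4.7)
The plan is to mirror the final lemma of Section~\ref{sec: laminar}. Suppose for contradiction that the assignment of some job $j^\star$ fails, and let $F = F_0 \cup F_1 \cup \dots \cup F_\tau$ be the resulting failure set. By Lemma~\ref{lem: general-fs-bp}(iii), $F_0$ already $\gamma_0$-blocks every $F_i$ with $\gamma_0 = 1/(qm\tau)$. I would then iterate Lemma~\ref{lem: general-exp-inc} roughly $\log_2(qm\tau)$ times, each step doubling the blocking factor at the cost of $\OO(m)$ additional indices, to produce some $\lambda = \OO(m \log(m\tau))$ for which $F_\lambda$ $2$-blocks every $F_i$ with $i > \lambda$.

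The next step is exactly the load argument from the laminar case. For each $i > \lambda$ and $j \in F_i$, $2$-blocking gives $|I(F_\lambda) \cap I(j)| \ge 2\ell_j$, so $j$ contributes at least $|I(F_\lambda) \cap I(j)|/2$ to $I(F_\lambda)$. Summing over $j \in F_i$ and using $F_\lambda \prec F_i$ from Lemma~\ref{lem: general-fs-bp}(i), each such $F_i$ contributes at least $|I(F_\lambda)|/2$ to $I(F_\lambda)$. Theorem~\ref{thm: strong-density} caps the total contribution of $F_{\lambda+1}, \dots, F_\tau$ to $I(F_\lambda)$ at $m|I(F_\lambda)|$, yielding $(\tau - \lambda)/2 \le m$, i.e., the recursive inequality $\tau \le \lambda + 2m = \OO(m \log(m\tau))$.

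It remains to extract a concrete bound from $\tau = \OO(m \log(m\tau))$. Since $\log(m\tau) = \Theta(\log m)$ whenever $\tau$ is bounded by a polynomial in $m$, this forces $\tau = \OO(m \log m)$, so choosing $\tau$ to be a sufficiently large constant times $m \log m$ violates the derived inequality and yields the desired contradiction. Finally, by Lemma~\ref{lem: chain-selection} it suffices to take $g = \Theta(m\tau) = \OO(m^2 \log m)$ so that the candidate-chain selection also never fails. The main obstacle compared to the laminar proof is the degraded initial blocking factor $1/(qm\tau)$ in place of $1/m'$, together with the two-way coupling between $g$ and $\tau$ via Lemma~\ref{lem: chain-selection}; tracking this carefully through the doubling argument and the recursive inequality produces exactly one extra factor of $m$ in the final bound.
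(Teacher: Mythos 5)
Your proposal mirrors the paper's proof precisely: it builds the same failure set, applies Lemma~\ref{lem: general-fs-bp}(iii) and iterates Lemma~\ref{lem: general-exp-inc} to obtain a $\lambda = \OO(m\log(qm\tau))$ with $F_\lambda$ $2$-blocking all later $F_i$, runs the identical load argument via Theorem~\ref{thm: strong-density} to derive $\tau \le \lambda + 2m$, resolves the recursive bound to $\tau = \OO(m\log m)$, and finally invokes Lemma~\ref{lem: chain-selection} to set $g = \OO(m\tau) = \OO(m^2\log m)$. The reasoning and the bookkeeping of the degraded initial blocking factor and the $g$--$\tau$ coupling are both correct and match the paper's proof.
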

\begin{proof}
We will choose $\tau=\OO(m\log m)$ such that we get a contradiction to the existence of the failure set $F$. Using Lemma~\ref{lem: general-fs-bp} (iii) and Lemma~\ref{lem: general-exp-inc}, we can choose $\lambda=\OO(m\log (qm\tau))$ such that $F_\lambda$ $2$-blocks every $F_i$ with $i>\lambda$. Now we lower bound the total contribution of these $F_i$ to $I(F_\lambda)$: For some $j\in F_i$ where $i\geq\lambda$, its contribution to $I(F_\lambda)$ is $$|I(F_{\lambda})\cap I(j)|-\ell_{j}\ge \frac{|I(F_{\lambda})\cap I(j)|}{2}$$ since $|I(F_{\lambda})\cap I(j)|\ge 2\ell_{j}$. Thus we get as total contribution of $F_i$ to $F_\lambda$ at least
\begin{align*}
	\sum_{j\in F_i}\frac{|I(F_{\lambda})\cap I(j)|}{2} &\ge \frac{1}{2}\cdot \Big|I(F_{\lambda})\cap \big(\bigcup_{j\in F_k}I(j)\big)\Big|\\
	&= \frac{|I(F_{\lambda})\cap I(F_i)|}{2}\ge \frac{|I(F_{\lambda})|}{2},
\end{align*}
where the last inequality follows from $F_{\lambda}\prec F_{i}$. Therefore, the total contribution of all $F_i$ for $i>\lambda$ is at least $(\tau-\lambda)|I(F_\lambda)|/2$, which is at most $m|I(F_\lambda)|$ by Theorem~\ref{thm: strong-density}. Thus, we must have $\tau \leq \lambda+2m=\OO(m\log (q m \tau))$, and appropriately choosing $\tau=\OO(m\log m)$ yields a contradiction. By Lemma~\ref{lem: chain-selection}, we can choose $g=\OO(m\tau)=\OO(m^2\log m)$.
\qed\end{proof}
We now obtain Theorem~\ref{thm: general} by putting Lemmas~\ref{lem: general-feas-sched} and~\ref{lem: general-dont-ever-fail} together.
\end{document}